\documentclass[a4paper,11pt]{article}
\input{Preamble.tex}

\begin{document}

\pagenumbering{Roman}

\hypersetup{pageanchor=false}
\title{Courcelle's Theorem in Truly Linear FPT}

\author{Tuukka Korhonen\thanks{University of Copenhagen, Denmark. \texttt{tuko@di.ku.dk}. Supported by the European Union under Marie Skłodowska-Curie Actions (MSCA), project no. 101206430, and by the VILLUM Foundation, Grant Number 54451, Basic Algorithms Research Copenhagen (BARC).}
\and
Daniel Lokshtanov\thanks{Department of Computer Science, University of California Santa Barbara, Santa Barbara, CA, USA. \texttt{daniello@ucsb.edu}. Supported by NSF Grant CCF-2505099.}
\and
Saket Saurabh\thanks{The Institute of Mathematical Sciences, HBNI, Chennai, India. \texttt{saket@imsc.res.in}.}
}


\maketitle

\thispagestyle{empty}

\begin{abstract}
Recently, Bumpus, Downey, Eagling-Vose, Enright, Fellows, Kutner, Larios-Jones, Martin, Rosamond, and Yates defined \emph{Truly Linear FPT} (TLFPT) to be the class of parameterized problems with algorithms running in time $\OO(n) + f(k)$, where $n$ is the input size and $k$ the parameter [arXiv:2606.02492].
They gave several algorithmic techniques for designing TLFPT algorithms, but left parameterization by treewidth open.

In this paper, we give a general method for designing TLFPT algorithms parameterized by treewidth, solving three open problems posed by Bumpus et al.
In particular, we give a TLFPT algorithm for Courcelle's theorem: We show that given an $n$-vertex $m$-edge graph $G$, an integer $k$, and a $\CMSO_2$-formula $\varphi$, we can in time $\OO(n+m) + f(k, \varphi)$ either conclude that the treewidth of $G$ is more than $k$, or check whether $G$ satisfies $\varphi$.
As a part of our algorithm, we give an approximation algorithm for treewidth that runs in time $\OO(n+m)$ and returns a  tree decomposition whose width is at most $2^{\OO(k)}$ times the optimum. 
Our result also implies a TLFPT algorithm for computing the value of treewidth exactly.
\end{abstract}

\begin{textblock}{20}(-0.5, 7.4)
\includegraphics[width=100px]{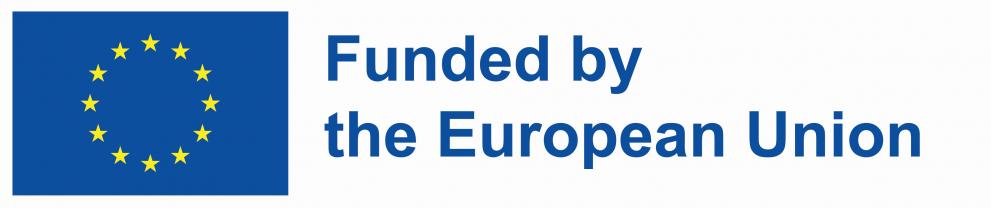}%
\end{textblock}

\thispagestyle{empty}

\newpage


\setcounter{page}{1}



\pagenumbering{arabic}

\hypersetup{pageanchor=true}

\clearpage
\setcounter{page}{1}

\section{Introduction}
\label{sec:intro}
In parameterized algorithms the running time is measured in terms of the input size $n$ and a parameter $k$, which captures some relevant additional information about the input instance.
An algorithm is considered ``good'' if its running time is bounded by $f(k)n^c$, where $f$ is a function independent of $n$ and $c$ is a fixed constant independent of both $n$ and $k$. Such algorithms are called {\em fixed parameter tractable} (FPT).
Not all FPT algorithms are equally good, and much research has been devoted to making $f(k)$ grow as slowly as possible with $k$~\cite{DBLP:journals/talg/CyganNPPRW22,DemaineFHT05jacm,LokshtanovMS11}, or $c$ as small as possible~\cite{bodlaender1993linear, iwata20180, DBLP:conf/focs/KorhonenPS24}.

Beyond optimizing $f(k)$ and $n^c$, one could ask whether the {\em multiplicative} form of the running time is necessary, or whether one could instead obtain algorithms running in time $g(k) + \OO(n^d)$. It is well known, however, that the {\em additive} definition of FPT is equivalent to the multiplicative one: in particular $g(k) + n^d \leq g(k) \cdot n^d$, and similarly $f(k)n^c \leq (f(k))^2 + n^{2c}$.
In a recent foundational paper, Bumpus et al.~\cite{DBLP:journals/corr/abs-2606-02492} observed that this equivalence breaks down once one cares simultaneously about {\em both} the additive-versus-multiplicative distinction and the exponent of $n$. They introduced the class Truly Linear FPT (TLFPT), consisting of all problems solvable in time $\OO(n) + f(k)$ for some function $f$, and proved, via a diagonalization argument, that there exist parameterized problems solvable in time $f(k)n$ but not in TLFPT.
This definition opens up an exciting new research direction: for problems that admit a linear FPT algorithm (one running in time $f(k)n$), can one obtain a TLFPT algorithm? 

Bumpus et al.~\cite{DBLP:journals/corr/abs-2606-02492} gave several techniques for designing TLFPT algorithms, and posed six open problems on the existence of TLFPT algorithms for concrete problems. Three of these would be directly solved by a general method for dynamic programming parameterized by treewidth in TLFPT time.
%
%
In this paper we give such a method,
resolving these three questions and partially resolving a fourth.
Like Bumpus et al.~\cite{DBLP:journals/corr/abs-2606-02492}, we work in the word-RAM model of computation~\cite{savage1998models}.

The standard benchmark for performing dynamic programming on tree decompositions across different computational settings is Courcelle's theorem~\cite{Courcelle90} (see also~\cite{DBLP:journals/jal/ArnborgLS91,DBLP:journals/algorithmica/BoriePT92}), which in a certain sense captures all finite-state dynamic programming algorithms parameterized by treewidth~\cite{DBLP:conf/lics/BojanczykP16}.
Our main result is a TLFPT algorithm for Courcelle's theorem.

\begin{restatable}{theorem}{courcelle}
\label{lem:courcellethm}
There is an algorithm that, given an $n$-vertex $m$-edge graph $G$, an integer $k$, and a $\CMSO_2$-sentence $\varphi$, in time $\OO(n+m) + f(k,\varphi)$, for a computable function $f$, returns either the conclusion that $\tw(G) > k$, or whether $G$ satisfies $\varphi$.
\end{restatable}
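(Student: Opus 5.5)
The plan is to reduce the graph, in linear time and with no dependence on $k$ or $\varphi$ in the linear term, to an equivalent instance of size bounded by a function of $k$ and $\varphi$. On that small instance we then run the classical Courcelle algorithm, which costs only $f(k,\varphi)$.

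First we dispose of dense inputs. If $m > kn$, then $\tw(G) > k$, since graphs of treewidth at most $k$ have fewer than $kn$ edges, and we return this at once. Otherwise $m = \OO(kn)$; to keep the running time honest we read at most $kn+1$ edges and stop, so the cost is $\OO(n+m)$ regardless.

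Next we compute a tree decomposition of width $w = 2^{\OO(k)}$ in time $\OO(n+m)$, using the approximation algorithm announced in the abstract, or conclude $\tw(G)>k$. The difficulty is that ordinary dynamic programming over this decomposition costs $g(w,\varphi)\cdot n$, a multiplicative bound that is not allowed. We avoid it by using the finite-index, Myhill--Nerode type property of $\CMSO_2$. For fixed $\varphi$ and boundary size $w+1$, the boundaried graphs fall into at most $N(w,\varphi)$ equivalence classes, and for each class a minimal representative of size $s(w,\varphi)$ can be computed in time depending only on $w$ and $\varphi$.

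The core step, and the main obstacle, is to compress the decomposition in linear time without paying $g(w,\varphi)$ per node. The first idea is to process the nodes in batches. Cut the rooted decomposition into $\OO(n/L)$ pieces, each with $\OO(L)$ nodes and $\OO(1)$ boundary bags, where $L = L(w,\varphi)$ is large; this is a standard tree-partition. Each piece is a boundaried graph with boundary of size $\OO(w)$. A piece can be encoded as a word of length $\OO(L\log L)$ bits, or as a bounded number of machine words once $L$ is small relative to $\log n$. Its type can then be obtained from a precomputed lookup table indexed by the encoding. Building this table costs $f(k,\varphi)$, and each lookup costs $\OO(1)$ or $\OO(L)$ word operations.

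This approach must be balanced. Lookup tables only help when $L\cdot\mathrm{poly}(w) \le \epsilon\log n$; when $n$ is small relative to $f(k,\varphi)$, the total time $\OO(n\cdot g)$ is already within $f(k,\varphi)$. The case split is therefore on whether $\log n \ge h(k,\varphi)$.

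In the large-$n$ case we carry out the reduction as follows. Choose $L$ so that pieces fit in $\OO(1)$ words and the table has size $n^{1/2}$. Replace each piece by its representative, and recurse on the resulting decomposition, which has shrunk by a constant factor. The geometric sum then gives $\OO(n+m)$ in total. The hardest technical points are to carry out the tree partition and the encoding in linear time in the word-RAM model, and to ensure the table can be built in $o(n)+f(k,\varphi)$ time. Once the remaining graph has size bounded in terms of $k$ and $\varphi$, we evaluate $\varphi$ by brute force; the counting predicates of $\CMSO_2$ are covered by the same finite-index argument.
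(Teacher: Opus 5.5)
Your overall framework matches the paper's. You take a $2^{O(k)}$-width decomposition from \Cref{thm:twappx} and tree-partition it into $O(n/L)$ pieces with $O(1)$ boundary bags; in the paper this partition is built into \Cref{thm:twappx} through its parameter $s$, and the pieces become the large leaf bags. You then look up piece types from word-packed encodings, with a case split on whether $n$ exceeds a function of $(k,\varphi)$, which is \Cref{lem:tabulation}. Keep $L$ a function of $(k,\varphi)$ only, so the table has $f(k,\varphi)$ entries filled by brute force; the $n^{1/2}$-sized table contradicts your own claim that building it costs $f(k,\varphi)$.

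The gap is the step ``replace each piece by its representative and recurse on the resulting decomposition''.
\begin{itemize}
\item Your claim that minimal representatives of the Myhill--Nerode classes can be computed is false as stated. Over all boundaried graphs there is no computable bound on their size: such a bound would let one decide finite satisfiability of first-order sentences by enumerating small graphs, contradicting Trakhtenbrot's theorem. You must work with $(r,p)$-types of pieces of width $w$.
\item More seriously, the recursion needs the new decomposition to have width $w$ again, because the table is built once for width $w$. A smallest graph of the same type need not admit a width-$w$ decomposition with one bag containing the boundary towards $x_1$, another containing the boundary towards $x_2$, and the shared boundary vertices on the path between them. If you simply put the representative with its boundary into one bag, the width becomes about $2w+s(w,\varphi)$ and grows again in every round. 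After $\Theta(\log n)$ rounds neither the table size nor the per-round cost is bounded.
\end{itemize}
Repairing this needs three things: representatives chosen among decomposed pieces with the same interface, a pumping argument along the decomposition giving a computable size bound $s$, and $L \gg s\,w^2$ so that vertices, edges and bags all shrink geometrically.

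The recursion is unnecessary, and dropping it gives exactly the paper's proof. After one round only $O(n/L)$ separator nodes remain. With the piece types in hand, ordinary dynamic programming over these nodes uses the computable gluing and relabeling maps of \Cref{lem:courcellejoin} and \Cref{lem:courcellepermute}. It costs $h(w,r,p)$ per node, hence $O(n)$ in total once $L \ge h(w,r,p)$, and no representatives are ever needed. The paper also avoids two-sided interfaces: each component hangs as a leaf below a node whose bag is the union of its two neighbouring separator bags, so every leaf has a single boundary.

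The points you flag as hardest are routine. A global counting sort renames every piece into $[s]$, and the edge annotation (\Cref{lem:edgeannotate}) assigns each edge to exactly one piece. In the large-$n$ case each encoding is read off as an integer in $[n]$ and the pieces are bucketed. The statement then follows from \Cref{thm:typecourcelle}: normalize $\varphi$ to an equivalent sentence of quantifier rank $r$ with moduli at most $p$, and check whether it lies in the computed $(r,p)$-type.
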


The algorithm of \Cref{lem:courcellethm} solves Question~1 of~\cite{DBLP:journals/corr/abs-2606-02492}, which asks ``Is $\MSO_2$ model checking in TLFPT parameterized by BFS-width?''.
As $\MSO_2$ is a restriction of $\CMSO_2$ and BFS-width is a graph parameter whose value is at least treewidth (in fact, at least pathwidth), this solves their question in a higher generality than it was asked.

The main ingredient of the algorithm of \Cref{lem:courcellethm} is the following approximation algorithm for treewidth, which produces a tree decomposition with specific properties that are suitable for the design of TLFPT algorithms in general.

\begin{restatable}{theorem}{twappxthm}
\label{thm:twappx}
There is an algorithm that, given an $n$-vertex $m$-edge graph $G$ and integers $k$ and $s$ with $1 \le s \le n$, in time $\OO(n+m)$ either determines that $\tw(G) > k$, or returns a tree decomposition of $G$, whose leaf bags have size $\le s \cdot 2^{\OO(k)}$, non-leaf bags size $\le 2^{\OO(k)}$, and the number of nodes is $\le n/s$.
\end{restatable}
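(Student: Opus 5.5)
Note that $m \le kn$ may be assumed: if $m > kn$ then $\tw(G) > k$, since graphs of treewidth at most $k$ are $k$-degenerate. Counting edges is $\OO(m)$, or $\OO(kn)$ if we stop early, and $\OO(kn)$ is not within budget. So the first step is the degeneracy check: scan the edge list only until $kn+1$ edges have been seen. This costs $\OO(\min(m,kn))$. The factor $k$ is a problem, but $kn$ edges are still input size, so the cost is $\OO(n+m)$. The main tool is \emph{contraction to a small instance plus lifting}.

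\textbf{Step 1: reduction.} Compute, in $\OO(n+m)$ time, a partition of $V(G)$ into at most $n/s'$ connected clusters of size at most $s'$, where $s' = s\cdot 2^{\OO(k)}/\text{poly}$. The natural way to do this in linear time is Bodlaender-style matching and contraction, iterated geometrically: for bounded-treewidth graphs, a constant fraction of vertices are either matched by a maximal matching or are low-degree vertices with identical twin-class neighbourhoods. The issue is that Bodlaender's argument loses factors depending on $k$ per round. I would instead use a cruder scheme. Grow clusters by BFS / greedy merging of adjacent clusters using only union-find-free linear passes, and stop merging a cluster once it reaches size $s$. Degeneracy ensures that clusters that cannot grow (all neighbours in full clusters) are few or attachable. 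The contracted graph $H$ then has at most $n/s$ vertices and $\tw(H) \le$ something like $(k+1)\cdot$(cluster boundary), which is not bounded. So plain contraction fails.

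\textbf{Step 1 revised.} Use separators instead. Since $\tw(G)\le k$, $G$ has balanced separators of size $k+1$. The standard Reed / Robertson–Seymour recursive decomposition gives bags of size $\OO(k)$ with a $2^{\OO(k)}$ or constant approximation, but costs $f(k)n\log n$. The plan is to run a linear-time approximate decomposition only down to pieces of size about $s\cdot 2^{\OO(k)}$, and treat each remaining piece as a leaf bag. This is where the leaf-bag size $s\cdot 2^{\OO(k)}$ comes from. Concretely:
\begin{enumerate}
\item Compute, in $\OO(n+m)$ time, a graph $G'$ on at most $n/s$ ``super-vertices''. Use a linear-time randomized-free clustering; for example, pick a BFS spanning forest and cut it into subtrees of size in $[s,3s]$ (the standard tree-partition trick, linear time). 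Each part is connected, and there are at most $n/s$ parts.
\item Build the quotient $G'$, using edge deduplication by radix sort in $\OO(n+m)$.
\item Run an $f(k)\cdot|G'|$-type algorithm on $G'$ to compute a decomposition of width $2^{\OO(k)}$ relative to a suitable weighted measure. Then expand each super-vertex into a leaf bag containing its part plus the adjacent separator vertices.
\end{enumerate}
The running time of the last algorithm is $f(k)\cdot n/s$. This is linear only when $s\ge f(k)$. For small $s$ the claim is just a linear-time $2^{\OO(k)}$-approximation with $n$ nodes, which must be handled directly.

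\textbf{Main obstacle.} The core difficulty is a genuinely $\OO(n+m)$ algorithm with no $f(k)$ factor that returns width $2^{\OO(k)}$ or rejects. I would design it as follows. Recursively find balanced separators using a flow-augmentation procedure with $k+1$ augmenting-path rounds. Each round is done in time linear in the current piece, which gives $\OO(k\cdot|\text{piece}|)$. To kill the factor $k$, I would charge it to the $2^{\OO(k)}$ bag sizes: allow separators of size $2^{\OO(k)}$ found by a cheaper greedy procedure. One candidate is a BFS layering in which, by treewidth, some layer window of thickness $2^{\OO(k)}$ has small size relative to $n$. Recursion depth is kept logarithmic in $n/s$ by stopping at pieces of size at most $s\cdot 2^{\OO(k)}$, and the geometric shrinking of pieces makes the total work linear. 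The node count at most $n/s$ follows because every leaf piece has size at least about $s$, obtained by merging small sibling pieces. I expect that proving that a cheap, $k$-independent-cost separator step exists is the crux. If the BFS-layer idea fails, I would fall back to amortizing the $k$ augmenting paths against the shrinking of piece sizes.
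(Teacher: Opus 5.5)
There is a genuine gap: you never obtain a linear-time way to shrink $G$, and the concrete steps you propose fail.

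\textbf{The reduction step.} Your revised Step~1 cuts a BFS spanning forest into connected subtrees of size in $[s,3s]$. Such a partition does not exist in general. Tree chopping only gives parts of size in $[h,h\Delta]$ (\Cref{lem:treechop}). In the star $K_{1,n-1}$, every connected part avoiding the centre is a single leaf. So either the centre's part has more than $3s$ vertices, or there are at least $n-3s$ parts, far more than $n/s$ for $s\ge 2$. Your stated reason for abandoning plain contraction is also off: contracting connected sets is a minor operation and never increases treewidth. The real obstruction is that a partition into few connected parts need not exist. Moreover, even granting parts of size $\Theta(s)$, un-contracting a decomposition of the quotient puts whole parts into internal bags as well. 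Non-leaf bags would then have size $s\cdot 2^{\OO(k)}$ rather than $2^{\OO(k)}$.

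\textbf{The fallback.} Your approach to the ``main obstacle'' does not help either. Balanced-separator recursion processes essentially all of $V(G)$ on each of its roughly $\log(n/s)$ levels. So it costs $\Omega(n\log(n/s))$, no matter how fast individual pieces shrink. BFS layer windows also need not be separators of size bounded in $k$. For example, in a complete binary tree searched from a leaf, every layer window that is a balanced separator has $\Omega(\sqrt n)$ vertices.

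\textbf{The missing idea: decouple the two roles you give to $s$.} The paper first shrinks $G$ by a factor $h=2^{\OO(k)}$, equal to the $k$-dependent factor of the $2^{\OO(k)}n+\OO(m)$ $2$-approximation, using parts of size only $2^{\OO(k)}$ (\Cref{lem:findpartition}). It proceeds as follows.
\begin{enumerate}
\item Vertices of degree $\ge d=h^2k2^{k+6}$ form a set $H$ with $|H|\le 2m/d\le n/(h^22^{k+5})$, using $m\le kn$. These stay singletons.
\item Large components of $G-H$ have maximum degree $<d$ and are chopped into connected parts of size in $[2h,2hd]$.
\item Small components are bucketed by $N(C)\subseteq H$, and components with a common neighbourhood are merged into groups of size in $[2h,6h]$. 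Such a group is disconnected, but contracting it equals contracting each component and deleting all but one of the resulting twins. Hence treewidth does not increase, and this is exactly what handles stars.
\end{enumerate}
The neighbourhood-complexity bound of \Cref{lem:nbcompl} gives at most $2^{k+2}|H|$ distinct traces on $H$. This ensures the leftover buckets of total size $\le 2h$ contain at most $n/(4h)$ vertices, and more buckets than that certify $\tw(G)>k$.

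\textbf{Where $s$ enters.} Only after running the $2$-approximation on $G\contr\cP$ does $s$ come in, via \Cref{lem:compresstd}. It chops the degree-$3$ decomposition tree into pieces of $\Theta(h')$ nodes. It then takes the LCA-closure $X$ of their top nodes, so each component of $T-X$ has at most two neighbours in $X$. Each component becomes a leaf holding the union of its bags, hanging from a node whose bag is the union of those at most two bags. Here $h'=\max(s,k\,g(k))$, where $g(k)=2^{\OO(k)}$ bounds the part sizes. Un-contracting then gives leaf bags of size $s\cdot 2^{\OO(k)}$, non-leaf bags of size $2^{\OO(k)}$, at most $n/s$ nodes, and total size $\OO(n)$. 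No separator algorithm beyond the black-box $2$-approximation is needed, and there is no separate small-$s$ case.
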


The idea for designing TLFPT algorithms parameterized by treewidth using \Cref{thm:twappx} is as follows.
Suppose we have a standard dynamic programming algorithm that runs in time $f(w)$ per bag of size $w$.
Now, if we choose $s > f(2^{\OO(k)})$, we can afford to run this dynamic programming algorithm completely normally on the non-leaf bags of the decomposition, as this takes $f(2^{\OO(k)}) \cdot n/s = \OO(n)$ time.
It remains to efficiently find the dynamic programming states for the leaf-bags, which may have size up to  $s \cdot 2^{\OO(k)}$.
Each of them interacts with the rest of the graph only through a boundary of $2^{\OO(k)}$ vertices, so the number of states is still bounded by a function of $k$.
To find the state efficiently, we can pre-compute the state for every possible boundaried graph with $\le s \cdot 2^{\OO(k)}$ vertices in time $g(s,k)$, and fetch it from a global array for each leaf-node.
This results in running time of $\OO(n+m) + g(s,k) = \OO(n+m) + h(k)$.

Implementing the above sketch requires paying attention to low-level details of data structures, but indeed works for Courcelle's theorem.
We believe that it also extends to recovering a solution witness in Courcelle's theorem, and for unweighted optimization variants of Courcelle's theorem~\cite{DBLP:journals/jal/ArnborgLS91}.
However, extending it to weighted problems is not clear, because when weights are taken into account, all leaf bags can induce pairwise non-isomorphic weighted graphs.

Questions~4 and~5 of~\cite{DBLP:journals/corr/abs-2606-02492} asked whether pathwidth and treewidth are TLFPT parameterized by themselves.
Decision versions of these problems follow directly from \Cref{lem:courcellethm} by using the known results that $\tw(G) \le k$ and $\pw(G) \le k$ are $\MSO_2$-expressible graph properties~\cite{DBLP:conf/icalp/LagergrenA91,DBLP:journals/jct/Lagergren98}.
Finding the value in a trivial manner would add a factor of $k$ to the $\OO(n+m)$ in the running time, but the more general statement of our algorithm (\Cref{thm:typecourcelle}) directly works for deciding multiple $\CMSO_2$ properties in a single shot, so we get the following result.

\begin{restatable}{corollary}{twcomp}
\label{cor:twcomp}
There is an algorithm that, given an $n$-vertex $m$-edge graph $G$ and an integer $k$, runs in time $\OO(n+m) + f(k)$, for a computable function $f$, and returns (1) either the value $\tw(G)$ or that $\tw(G) > k$, and (2) either the value $\pw(G)$ or that $\pw(G) > k$.
\end{restatable}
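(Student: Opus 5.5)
The plan is to reduce \Cref{cor:twcomp} to the multi-formula version of Courcelle's theorem mentioned just before it (\Cref{thm:typecourcelle}). That result decides a bounded family of $\CMSO_2$ sentences in a single pass, at cost $\OO(n+m)$ plus a function of $k$ and the family. The family will be the $2(k+1)$ sentences
\[
\psi^{\tw}_0,\dots,\psi^{\tw}_k \quad\text{and}\quad \psi^{\pw}_0,\dots,\psi^{\pw}_k,
\]
where $\psi^{\tw}_j$ expresses $\tw(G)\le j$ and $\psi^{\pw}_j$ expresses $\pw(G)\le j$.

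For each fixed $j$ these properties are $\MSO_2$-expressible by the results of Lagergren and Arnborg~\cite{DBLP:conf/icalp/LagergrenA91,DBLP:journals/jct/Lagergren98}. Both are minor-closed properties with a finite obstruction set, and each excluded minor is $\MSO_2$-definable. The corresponding sentences are computable from $j$: for instance, the obstruction sets are computable via the bounded-treewidth machinery. Hence the whole family, and its total size, is bounded by a computable function of $k$ alone.

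The algorithm runs in three steps.
\begin{enumerate}
\item Construct the $2(k+1)$ sentences, which takes time depending only on $k$.
\item Invoke \Cref{thm:typecourcelle} once with parameter $k$ and this family. It either reports $\tw(G)>k$, or returns the truth value of every sentence.
\item If it reports $\tw(G)>k$, output that for (1). For (2), use $\pw(G)\ge\tw(G)$ to also output $\pw(G)>k$. Otherwise, set $\tw(G)$ to the least $j$ with $\psi^{\tw}_j$ true; such a $j$ exists because $\tw(G)\le k$. Set $\pw(G)$ to the least $j$ with $\psi^{\pw}_j$ true, or report $\pw(G)>k$ if there is none.
\end{enumerate}
Together these give a running time of $\OO(n+m)+f(k)$.

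The main obstacle is ensuring the linear term is not multiplied by $k$. That is exactly why the family is passed to a single call rather than checked with $k$ separate runs of \Cref{lem:courcellethm}.

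If \Cref{thm:typecourcelle} instead outputs a $\CMSO_2$ type of bounded rank, the same single-pass guarantee holds with less to check. We take the maximum quantifier rank $q(k)$ over the family. Each sentence's truth value is then determined by the rank-$q(k)$ type of $G$ and can be read off it in time $f(k)$ by a lookup computed in advance. The one remaining check is that the dependence on the formulas stays computable, which follows from the effective constructions of the formulas above.
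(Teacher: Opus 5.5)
Your proposal is correct and follows the paper's proof. You compute the sentences for $\tw(G)\le w$ and $\pw(G)\le w$ for all $0\le w\le k$, normalize them into $\formulas^{0,r,p}$ with $r$ the maximum quantifier rank, make a single call to \Cref{thm:typecourcelle}, and read off the least satisfied $w$, using $\pw(G)\ge\tw(G)$ when the call reports $\tw(G)>k$; the only detail you leave implicit is that the type is also indexed by a modulus bound $p$, which does not matter here because these sentences need no modular counting.
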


The algorithm of \Cref{cor:twcomp} does not return the corresponding decomposition.
This is not only a fault of the algorithm, but we observe that an explicit representation of an optimal tree decomposition or path decomposition is not always possible in TLFPT space: For the $k \times n$-grid (for $n \ge 8k$), which has treewidth and pathwidth $k$, $kn$ vertices, and $\OO(kn)$ edges, any tree decomposition of width $k$ must have at least $\Omega(kn)$ bags of size at least $k$, i.e., total size at least $\Omega(k^2 n)$ (see \Cref{lem:gridlb}).
However, there are no such obstacles for indirect representations, such as the elimination ordering~\cite{ArnborgCP87} or the elimination forest~\cite{DBLP:journals/lmcs/BojanczykP22} representation, or for explicit representations of approximately optimal tree decompositions.

Question~2 of~\cite{DBLP:journals/corr/abs-2606-02492} asked ``For which families $\{H_i\}_{i \in \mathbb{N}}$ is $H_i$-minor TLFPT parameterized by $|H_i|$?''.
For most such families containing non-planar graphs, it is an open problem to even find a linear FPT algorithm.
Currently, the best running time in general is $f(H) \cdot n^{1+o(1)}$~\cite{DBLP:conf/focs/KorhonenPS24}, while there appears to be no reason to rule out a linear FPT or even a TLFPT algorithm.
For families of planar graphs, \Cref{lem:courcellethm} combined with the grid-minor theorem of Robertson and Seymour~\cite{RobertsonS86a} directly implies a TLFPT algorithm.

\begin{restatable}{corollary}{planarminors}\label{cor:planarminors}
\label{cor:planarminors}
There is an algorithm that, given an $n$-vertex $m$-edge graph $G$ and a set $\mathcal{H} = \{H_1, \ldots, H_\ell\}$ of graphs containing at least one planar graph, in time $\OO(n+m) + f(\mathcal{H})$, for a computable function $f$, returns whether $G$ contains at least one graph from $\mathcal{H}$ as a minor.
\end{restatable}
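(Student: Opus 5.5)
We reduce \Cref{cor:planarminors} to \Cref{lem:courcellethm} using the grid-minor theorem. Fix a planar graph $H \in \mathcal{H}$; if several are planar, pick any one. By the grid-minor theorem of Robertson and Seymour~\cite{RobertsonS86a}, every planar graph $H$ is a minor of the $r \times r$ grid for some $r = \OO(|V(H)|+|E(H)|)$. Also, every graph of treewidth larger than some computable $g(r)$ contains the $r\times r$ grid as a minor. We set $k := g(r)$, which depends only on $H$ and hence only on $\mathcal{H}$.

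Next we build one $\MSO_2$-sentence $\varphi_{\mathcal{H}}$ expressing ``$G$ contains some $H_i$ as a minor''. For a fixed graph $H_i$ with vertices $v_1,\ldots,v_p$, minor containment is expressed in the standard way. We existentially quantify vertex sets $X_1,\ldots,X_p$ that are pairwise disjoint and nonempty. Each $X_j$ must induce a connected subgraph; connectivity is $\MSO$-expressible by saying that no nonempty proper subset of $X_j$ is closed under adjacency inside $X_j$. Finally, for every edge $v_jv_{j'}$ of $H_i$ we require an edge of $G$ between $X_j$ and $X_{j'}$. Taking the disjunction over $i=1,\ldots,\ell$ gives $\varphi_{\mathcal{H}}$, whose size depends only on $\mathcal{H}$. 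Since $\MSO_2 \subseteq \CMSO_2$, this is a valid input to \Cref{lem:courcellethm}.

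The algorithm runs \Cref{lem:courcellethm} on $(G,k,\varphi_{\mathcal{H}})$, which takes time $\OO(n+m) + f(k,\varphi_{\mathcal{H}}) = \OO(n+m) + f'(\mathcal{H})$. There are two possible outcomes.
\begin{itemize}
\item If the algorithm reports whether $G \models \varphi_{\mathcal{H}}$, we return that answer.
\item If it concludes $\tw(G) > k = g(r)$, then $G$ contains the $r\times r$ grid as a minor, hence contains $H$ as a minor, since the minor relation is transitive. We answer ``yes''.
\end{itemize}
Correctness in both cases is immediate.

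The only delicate point is computability of $f'$. This needs a computable bound $g$ in the grid-minor theorem; such bounds are known, e.g.\ the original Robertson--Seymour bound is explicit, and later polynomial bounds also exist. It also needs the grid size $r$ to be computable from $H$, for which it suffices to use a linear bound in the size of $H$ or to search over grids. Beyond that, the proof is a direct application of \Cref{lem:courcellethm}, so no real obstacle remains.
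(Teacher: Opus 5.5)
Your proof is correct and is essentially the paper's argument: use the grid-minor theorem to obtain a computable treewidth threshold above which the answer is forced to be yes, run \Cref{lem:courcellethm} with that threshold and the $\MSO_2$ sentence expressing ``some $H_i$ is a minor of $G$'', and answer yes exactly when either the treewidth exceeds the threshold or the sentence holds. The only cosmetic difference is that you go through an explicit $r\times r$ grid containing the chosen planar $H$. The paper instead uses the grid-minor theorem directly in the form ``excluding a planar $H$ bounds treewidth by $g(|H|)$'' and takes the minimum of $g(|H_i|)$ over the planar $H_i$; in both versions computability of the threshold rests on known explicit (e.g.\ polynomial) grid-minor bounds.
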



Using a ``win/win'' scheme similar to that of Corollary~\ref{cor:planarminors}, Theorem~\ref{lem:courcellethm} (or alternatively, by applying Corollary~\ref{cor:planarminors} with carefully chosen forbidden minors ${\cal H}$) directly implies that a number of classic parameterized problems admit truly linear FPT algorithms. 
%
%
%
The approach is as follows. Suppose that we are working with some parameterized problem parameterized by a parameter $k$, such that for every fixed value of $k$ the property under consideration is expressible by a CMSO$_2$ sentence $\varphi_k$. Suppose moreover that there exists a computable function $\eta$ such that, whenever $\operatorname{tw}(G)>\eta(k)$, the answer is already forced to be \textsf{yes} or \textsf{no}.
For such problems we immediately obtain a truly linear FPT algorithm by first running Theorem~\ref{lem:courcellethm} with treewidth bound $\eta(k)$ and formula
$\varphi_k$.  If the algorithm reports that $\operatorname{tw}(G)>\eta(k)$, we
return the forced answer.  
Otherwise we apply the algorithm of Theorem~\ref{lem:courcellethm} to determine whether
$G\models \varphi_k$.  This leads to truly linear FPT algorithms for all problems that satisfy the two assumptions above. 
We remark that this is a standard approach in parameterized algorithms, the only difference with previous algorithms is that we use a truly linear FPT algorithm to compute the treewidth and do the model checking. 
%
%
%
We list several applications of this pattern.

From the textbook~(\cite{DBLP:books/sp/CyganFKLMPPS15}, Chapter 7.7) the approach above applies to {\sc Vertex Cover}, {\sc Feedback Vertex Set}, and {\sc Treewidth-$\eta$-Deletion} for every fixed $\eta$. 
More generally, Demaine and Hajiaghayi~\cite{DBLP:journals/ejc/DemaineH07} showed that this approach applies to every parameter $k$ that (a) is positive for some $g \times g$ grid, (b) is at least the sum over the connected components of a disconnected graph, and (c) admits an FPT algorithm parameterized by the treewidth of the input graph and $k$. The exact same proof gives Truly Linear FPT algorithms for all problems in the framework of Demaine and Hajiaghayi~\cite{DBLP:journals/ejc/DemaineH07}, but with the algorithmic requirement (c) replaced by (c') for every fixed value of $k$ the property under consideration is expressible by a CMSO$_2$ sentence $\varphi_k$.
This immediately yields truly linear FPT algorithms for packing at least $k$ vertex-disjoint minor models of graphs from any fixed finite family ${\cal F}$ of graphs containing at least one planar graph. Most prominently it yields a truly linear FPT algorithm for {\sc Cycle Packing}. 
Finally it is well known that having a cycle of length at least $k$ or a path of length at least $k$ is expressible in MSO$_2$, and that every graph of treewidth at least $k$ contains both a cycle of length $k$ and a path of length $k$~\cite{downey2013fundamentals, fellows1989search}.
This yields truly linear FPT algorithms for~{\sc Long Path} and {\sc Long Cycle}.
We remark that Truly Linear FPT algorithms for {\sc Vertex Cover} and {\sc Long Path} were already obtained by Bumpus et al.~\cite{DBLP:journals/corr/abs-2606-02492}.

\paragraph{Sketch of the proof.}
We already sketched the proof of \Cref{lem:courcellethm}, assuming \Cref{thm:twappx}. So let us sketch here the proof of \Cref{thm:twappx}.

We focus first on the case of $s = 1$, i.e., $2^{\OO(k)}$-approximating treewidth in $\OO(n+m)$ time.
It is known that treewidth can be $2$-approximated in time $2^{\OO(k)} n$~\cite{DBLP:conf/focs/Korhonen21}.
Therefore, our goal is to shrink the size of the input graph $G$ by a factor of $2^{ck}$, for an appropriate constant $c$, in a way that preserves approximation, and then apply the $2$-approximation algorithm.
An idea for this would be to partition the vertex set of $G$ into connected subgraphs of size between $2^{ck}$ and $2^{ck+1}$, and contract each of them.
This does not increase treewidth, reduces treewidth by at most a factor of $2^{ck+1}$, and reduces the number of vertices by at least a factor of $2^{ck}$.
Moreover, a tree decomposition of the contracted graph can be lifted to a tree decomposition of the original graph by uncontracting.

The first issue in this approach is that there are graphs of bounded treewidth for which no such partition into connected subgraphs exists.
For example, stars do not have such a partition.
However, for stars, it is possible to contract disconnected subgraphs in a way that does not increase treewidth: We can merge leaves with each other without increasing treewidth.
We show that this idea generalizes from stars to arbitrary graphs of small treewidth: We can partition the vertex set of any $n$-vertex graph of treewidth $\le k$ into at most $n/2^{\Omega(k)}$ parts of size at most $2^{\OO(k)}$, so that contracting the parts does not increase treewidth.
Moreover, we give an $\OO(n+m)$ time algorithm for finding such a partition or the conclusion $\tw(G) > k$.

The above sketch works for $2^{\OO(k)}$-approximating the value of treewidth.
However, it does not give the corresponding tree decomposition in $\OO(n+m)$ time, because naively uncontracting can increase its total size to more than $\OO(n+m)$.
We solve this with the same technique as with which we introduce the parameter $s$ in \Cref{thm:twappx}.
In particular, for a parameter $s \ge 1$, we reduce the number of nodes by a factor of $s$, while increasing the sizes of only leaf bags by a factor of $s$.
This is done by rather standard tree partitioning techniques, with the idea of partitioning the decomposition tree into $\approx n/s$ connected subtrees of size $\approx s$ by removing $\approx n/s$ nodes, and letting connected subtrees be the large leaves while forming the non-leaf nodes from the removed nodes.
By an appropriate choice of $s$, we can ensure that uncontracting keeps the total size of the tree decomposition $\OO(n+m)$.

\section{Preliminaries}
We discuss preliminaries on graph theory, and present a formulation of Courcelle's theorem in terms of types and boundaried graphs. For an integer $n$, we denote by $[n]$ the set $\{1,\ldots,n\}$, which is the empty set when $n \le 0$. We assume the standard word-RAM model with words of length $\Theta(\log n)$, where $n$ is the input size.

\subsection{Graphs}
The set of vertices of a graph $G$ is denoted by $V(G)$ and the set of edges by $E(G)$.
We denote $|G| = |V(G)|+|E(G)|$.
For a graph $G$ and set $S \subseteq V(G)$ the subgraph of $G$ {\em induced by} $S$ is denoted by $G[S]$ and defined as the graph with vertex set $S$ and edge set $\{uv \in E(G) ~:~ \{u,v\} \subseteq S \}$. {\em Deleting} the vertex set $S$ from a graph $G$ results in the graph $G - S = G[V(G) \setminus S]$.
The {\em neighborhood} of a vertex $u \in V(G)$ is defined as $N(u) = \{v \in V(G) ~:~ uv \in E(G)\}$. The neighborhood of a vertex set $S$ is defined as $N(S) = \left( \bigcup_{u \in S} N(u) \right ) \setminus S$.
When representing graphs, we assume that $V(G)$ is a subset of the positive integers, and the graph is given by a linked list containing $V(G)$ and a linked list containing $E(G)$, with each edge represented as a pair of integers.
We will throughout work with graphs where $V(G) \subseteq [n]$, where $n$ is the original input size, so we assume that each integer representing a vertex fits in a single word.


A {\em tree decomposition} of a graph $G$ is a pair $\cT = (T, \bag)$ where $T$ is a tree and $\bag$ is a function that assigns to each node $t$ of $T$ a set $\bag(t)$ of vertices in $G$ such that the two following conditions are satisfied: {\em (i)} for every vertex $v \in V(G)$ the set $\{t \in V(T) : v \in \bag(t)\}$ is non-empty and induces a connected subtree of $T$, and {\em (ii)} for every edge $uv \in E(G)$ there exists a node $t \in V(T)$ such that $\{u,v\} \subseteq \bag(t)$.
A {\em rooted} tree decomposition is a tree decomposition $\cT = (T, \bag)$ where $T$ is a rooted tree. The {\em descendants} of a node $t$ in a rooted tree, denoted by $\desc(t)$, are the set of nodes of the subtree of $T$ rooted at $t$ (including $t$ itself).
The {\em width} of a tree decomposition $\cT = (T, \bag)$ is defined as $\max_{t \in V(T)} |\bag(t)| - 1$, and the {\em treewidth} of a graph $G$, denoted by $\tw(G)$, is defined as the minimum width of a tree decomposition of $G$.
The {\em size} of a tree decomposition is denoted by $|\cT|$ and defined as $|\cT| = |V(T)| + \sum_{t \in V(T)} |\bag(t)|$. In other words, the size of a tree decomposition is equal to (up to constant factors) the number of machine words needed to describe the tree decomposition in the obvious way where one lists all the nodes and edges of $T$ and the contents of $\bag(t)$ for every $t \in V(T)$.

An {\em elimination ordering} of a graph $G$ is an ordering of its vertices as $v_1, \ldots, v_n$. The {\em filled graph} resulting from $G$ and the elimination ordering is the graph $H$ with vertex set $V(G)$, such that for every pair $1 \leq i < j \leq n$ of integers there is an edge from $v_i$ to $v_j$ in $H$ if and only if there is a path between $v_i$ and $v_j$ in $G[\{v_1, \ldots, v_i\} \cup \{v_j\}]$.
The {\em width} of an elimination ordering of $G$ is the maximum clique size of the resulting filled graph $H$, minus one. It is well known (see e.g.~\cite{arnborg1985efficient}) that the treewidth of $G$ is equal to the minimum width of an elimination ordering of $G$.


For a vertex set $S$, {\em contracting} $S$ produces the graph $G \contr S$ obtained from $G$ by deleting $S$ and adding a new vertex $v_S = \min_{i \in S} i$ adjacent to $N(S)$. Contracting an edge $uv \in E(G)$ is defined as contracting the set $\{u, v\}$. Observe that contracting a set $\{v\}$ containing a single vertex $v$ leaves $G$ unchanged. 
We note that the contraction operation as defined in this paper allows contracting sets $S$ that do not necessarily induce connected subgraphs. 
For sets $S$ that do induce connected subgraphs our definition of contraction and the standard one coincide. 
For a partition $\cP$ of $V(G)$, contracting $\cP$ produces the graph $G \contr \cP$, obtained from $G$ by contracting each of the sets in $\cP$.
%

\subsection{$\CMSO_2$}
We use the standard definitions of Counting Monadic Second-order logic ($\CMSO_2$) on graphs.
We refer to~\cite{CEbook} for an extensive introduction, but recall here the basics and introduce our notation.

Formulas in $\CMSO_2$ have variables of four sorts: vertices, edges, vertex sets, and edge sets.
There are atomic formulas for (1)~testing the equality of two variables of the same sort, (2)~set inclusion, (3)~testing if an edge is incident to a vertex, and (4)~for all integers $a,m$ with $0 \le a < m$, testing whether the cardinality of a set is $a$ modulo $m$.
A $\CMSO_2$-formula is built from these atomic formulas with the connectives $\wedge$, $\vee$, and $\lnot$, and with existential and universal quantifiers.
A $\CMSO_2$-sentence is a $\CMSO_2$-formula without free variables.

The quantifier rank of a $\CMSO_2$-formula is the maximum number of nested quantifiers.
It is known (see~\cite{DBLP:journals/algorithmica/BoriePT92}) that there is a computable function $f(n,r)$, so that any $\CMSO_2$-formula $\varphi$ with at most $n$ free variables and quantifier rank at most $r$ is logically equivalent to a $\CMSO_2$-formula $\varphi'$ of length at most $f(n,r)$ and quantifier rank at most $r$.
Furthermore, $\varphi'$ is computable given $\varphi$.
We denote by $\formulas^{n,r,p}$ the set of $\CMSO_2$-formulas with at most $n$ free variables, each being a vertex variable and having its name from the set $\{x_1, \ldots, x_n\}$, quantifier-bound variables named from the set $\{y_1, \ldots, y_{f(n,r)}\}$, quantifier rank at most $r$, length at most $f(n,r)$, and each modular counting formula having modulus $m \le p$. 
Now, $\formulas^{n,r,p}$ is finite and computable given $n$, $r$, and $p$.
The set $\types^{n,r,p}$ is the powerset of $\formulas^{n,r,p}$.


\subsection{Boundaried graphs}
For an integer $b \ge 0$, a $b$-boundaried graph is a pair $\bg{G} = (G,\bd)$, where $G$ is a graph and $\bd$ is an injective partial function $\bd \colon [b] \pto V(G)$.
We denote by $\dom(\bd)$ the domain of $\bd$, i.e., the subset of $[b]$ for which $\bd$ is defined.
When discussing (non-boundaried) graphs in a context where boundaried graphs are expected, we view them as $0$-boundaried graphs.%

An {\em isomorphism} between two $b$-boundaried graphs $\bg{G}_1 = (G_1,\bd_1)$ and $\bg{G}_2 = (G_2,\bd_2)$ is a bijection $\phi : V(G_1) \rightarrow V(G_2)$ such that (1) $uv \in E(G_1)$ if and only if $\phi(u)\phi(v) \in E(G_2)$, and (2)~for all $u \in V(G_1)$ and $i \in [b]$, $\bd_1(i) = u$ if and only if $\bd_2(i) = \phi(u)$.  Two boundaried graphs are isomorphic if there is an isomorphism between them.

\paragraph{Gluing and permutation.}
For two $b$-boundaried graphs $\bg{G_1} = (G_1, \bd_1)$ and $\bg{G_2} = (G_2, \bd_2)$, the \emph{gluing} of $\bg{G_1}$ and $\bg{G_2}$, denoted by $\bg{G_1} \oplus \bg{G_2}$, is the $b$-boundaried graph $\bg{G_3} = (G_3, \bd_3)$ obtained as follows: We first construct $G_3$ by taking the disjoint union of $G_1$ and $G_2$, and for each $i \in \dom(\bd_1) \cap \dom(\bd_2)$ unifying the vertices $\bd_1(i)$ and $\bd_2(i)$.
Then, $\bd_3$ is constructed by, for each $i \in \dom(\bd_1) \cap \dom(\bd_2)$, setting $\bd_3(i) = \bd_1(i) = \bd_2(i)$, for each $i \in \dom(\bd_1) \setminus \dom(\bd_2)$, $\bd_3(i) = \bd_1(i)$, and for each $i \in \dom(\bd_2) \setminus \dom(\bd_1)$, $\bd_3(i) = \bd_2(i)$.

For a $b$-boundaried graph $\bg{G} = (G, \bd)$ and an injective partial function $f \colon [b] \pto [b]$, the $f$-permutation of $\bg{G}$, denoted by $\per_f(\bg{G})$, is the boundaried graph $\per_f(\bg{G}) = (G, \per_f(\bd))$, where $\per_f(\bd)(i) = \bd(f^{-1}(i))$ for all $i \in f(\dom(\bd))$, and $\per_f(\bd)(i)$ is undefined for other $i$.
Note that $|\dom(\per_f(\bd))| \le |\dom(\bd)|$.

\paragraph{Types.}
For a $b$-boundaried graph $\bg{G} = (G,\bd)$ with $\dom(\bd) = \{i_1, \ldots, i_\ell\}$, and a $\CMSO_2$-formula $\varphi(x_{i_1}, \ldots, x_{i_\ell})$, where $x_{i_1}, \ldots, x_{i_\ell}$ are free vertex-variables, we define that $\bg{G} \models \varphi$ if $G \models \varphi(\bd(i_1), \ldots, \bd(i_\ell))$.
Now, we define that for a $b$-boundaried graph $\bg{G}$, its $(r,p)$-type is $\type^{r,p}(\bg{G}) = \{\varphi \in \formulas^{b,r,p} \colon \bg{G} \models \varphi\}$.
Note that $\type^{r,p}(\bg{G}) \in \types^{b,r,p}$.

\subsection{Courcelle's theorem}
The following two lemmas give a formulation of Courcelle's theorem in terms of boundaried graphs and types. The first lemma states that when gluing two boundaried graphs, the type of the resulting boundaried graph is a function of the types of the two terms. 


\begin{lemma}[See~\cite{gartland2020finding} Proposition 8, and~\cite{grohe2009methods} Lemma 6.1]
\label{lem:courcellejoin}
For all $b,r,p \ge 0$, there exists a function $\oplus \colon \types^{b,r,p} \times \types^{b,r,p} \to \types^{b,r,p}$, computable given $b,r,p$, so that for all $b$-boundaried graphs $\bg{G_1}$ and $\bg{G_2}$ it holds that
\[\type^{r,p}(\bg{G_1}) \oplus \type^{r,p}(\bg{G_2}) = \type^{r,p}(\bg{G_1} \oplus \bg{G_2}).\]
\end{lemma}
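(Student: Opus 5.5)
The plan is to prove this by an Ehrenfeucht--Fraïssé game argument for $\CMSO_2$, the Feferman--Vaught style composition.

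\textbf{Step 1: game characterization.} For $b$-boundaried graphs, define the $r$-round $\CMSO_2$ game with modulus bound $p$. In each round, Spoiler picks a vertex, an edge, a vertex set, or an edge set in one graph. Duplicator answers with an object of the same sort in the other graph. For set moves, Duplicator must preserve the residue modulo every $m \le p$ of the cardinality. Duplicator wins if, at the end, the chosen objects together with the boundary vertices $\bd(i)$, $i\in\dom(\bd)$, satisfy the same atomic formulas in both graphs. These atomic formulas are equality, membership/inclusion, incidence, and cardinality residues modulo $m\le p$ of the chosen sets.

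The standard induction on $r$ should show that $\type^{r,p}(\bg{G}) = \type^{r,p}(\bg{H})$ iff Duplicator wins the game. The direction ``equal types $\Rightarrow$ Duplicator wins'' uses that $\formulas^{b,r,p}$ is finite up to equivalence, so each type is defined by one formula of rank $r$. The same must hold for the expanded signatures with the chosen objects added as constants; I would apply the normal-form bound from~\cite{DBLP:journals/algorithmica/BoriePT92} to those signatures too.

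\textbf{Step 2: composition.} Let $\type^{r,p}(\bg{G_1})=\type^{r,p}(\bg{G_1'})$ and $\type^{r,p}(\bg{G_2})=\type^{r,p}(\bg{G_2'})$. I want Duplicator to win on $\bg{G_1}\oplus\bg{G_2}$ versus $\bg{G_1'}\oplus\bg{G_2'}$, and she does this by playing the two component games in parallel.

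Any set $X$ chosen in the glued graph splits into $X\cap V(G_1)$ and $X\cap V(G_2)$, with the shared boundary vertices in both parts. Duplicator answers each part using her component strategy and takes the union.

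Consistency on the boundary is automatic, because winning positions respect equality and membership with the boundary constants. So the two answers agree on which boundary vertices are included.

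\textbf{Step 3: the main obstacle, cardinality.} The union's size is $|X_1|+|X_2|-|X_1\cap X_2|$, and $X_1\cap X_2$ is a set of boundary vertices determined by atomic facts. Hence preserving residues modulo $m$ in each component yields preservation modulo $m$ of the union.

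Edge sets behave the same way. Every edge of the glued graph lies in exactly one side, except parallel boundary edges, and those are handled because incidence is atomic.

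Point moves are answered in the component where they lie, with an arbitrary choice for boundary vertices, which are shared. Atomic facts in the glued graph are Boolean combinations of atomic facts in the components. This gives the win.

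\textbf{Step 4: computability.} Step 2 shows that $\oplus$ is well defined as a function on realized types. To compute it, for each pair of types I would enumerate small $b$-boundaried graphs until I find representatives realizing both types. Then I glue them and evaluate all formulas of $\formulas^{b,r,p}$ by brute force. Unrealizable pairs get an arbitrary value.

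Testing whether a graph realizes a given type is decidable by brute-force model checking. The remaining question is termination: enumeration alone does not tell us when a type has no representative. To fix this, I would instead build $\oplus$ syntactically. The game proof above translates into an explicit rewriting of each rank-$r$ formula over the glued graph into a Boolean combination of rank-$r$ formulas over the two parts, and this rewriting is computable by induction on formula structure.
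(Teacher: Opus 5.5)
Your argument is correct and is the standard Ehrenfeucht--Fra\"{\i}ss\'e/Feferman--Vaught composition proof behind the results the paper cites; the paper itself gives no proof, only the references to Gartland et al.\ and Grohe. You also rightly saw that computability cannot come from enumerating representatives, since deciding whether a type is realized at all subsumes finite satisfiability and is undecidable uniformly in $r$, so the rank-preserving syntactic rewriting is the right fix; once written out, it also yields well-definedness on its own, making Steps~1--3 optional.

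The only soft spot is your aside on parallel boundary edges. If gluing keeps both copies, as is implicit in the paper's application where no edge ever occurs on both sides, every edge lies on exactly one side and there is nothing to handle. If coinciding edges were instead merged, atomic incidence would \emph{not} determine whether such an edge lies in an edge set chosen in the last round, so the naive union of the component answers could get the residue of $|Y|$ wrong.
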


The second lemma states that when re-labeling the boundary vertices of a boundaried graph, the type of the resulting boundaried graph is a function only of the type of the initial graph and the applied permutation on boundary labels. 
The lemma immediately follows from the definition of the type of a boundaried graph, together with the observation that, for every $b$-boundaried graph $\bg{G}$ and $\CMSO_2$-formula $\varphi(x_{i_1}, \ldots, x_{i_\ell})$, where $x_{i_1}, \ldots, x_{i_\ell}$ are free vertex-variables, $\bg{G} \models \varphi$ if and only if $\per_f(\bg{G}) \models \varphi'$ where $\varphi'$ is obtained from $\varphi$ by re-labeling the free vertex variables $x_{i_1}, \ldots, x_{i_\ell}$ according to $f$.

\begin{lemma}
\label{lem:courcellepermute}
For all $b,r,p \ge 0$, and each injective partial function $f \colon [b] \pto [b]$, there exists a function $\per_f \colon \types^{b,r,p} \to \types^{b,r,p}$, computable given $b,r,p$, so that for all $b$-boundaried graphs $\bg{G}$ it holds that
\[\per_f(\type^{r,p}(\bg{G})) = \type^{r,p}(\per_f(\bg{G})).\]
\end{lemma}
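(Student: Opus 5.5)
The plan is to define the map $\per_f$ on types explicitly by relabeling formulas. Given $f \colon [b] \pto [b]$ injective, let $\sigma_f$ be the syntactic operation on formulas in $\formulas^{b,r,p}$ that renames each free variable $x_i$ with $i \in \dom(f)$ to $x_{f(i)}$. Then for $\tau \in \types^{b,r,p}$ we set
\[\per_f(\tau) = \{\psi \in \formulas^{b,r,p} \colon \text{$\psi$ has all free variables among } \{x_j : j \in f(\dom(f))\} \text{ and } \sigma_f^{-1}(\psi) \in \tau\}.\]
Here $\sigma_f^{-1}(\psi)$ renames $x_j$ back to $x_{f^{-1}(j)}$. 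It still has quantifier rank at most $r$, moduli at most $p$, and the same length, and its bound variables are untouched, so it lies in $\formulas^{b,r,p}$. This set is finite and computable from $b,r,p$ and $f$, since $\formulas^{b,r,p}$ is finite and computable.

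Next I verify the identity $\per_f(\type^{r,p}(\bg{G})) = \type^{r,p}(\per_f(\bg{G}))$. Take $\psi \in \formulas^{b,r,p}$. Being in the type of a boundaried graph means that $\psi$ mentions only free variables $x_j$ with $j$ in the domain of the boundary. By the paper's convention, $\bg{G} \models \varphi(x_{i_1},\dots,x_{i_\ell})$ requires the free variables to be indexed by $\dom(\bd)$. The domain of $\per_f(\bd)$ is $f(\dom(\bd))$. So $\psi$ with free variables indexed by $J \subseteq f(\dom(\bd))$ satisfies $\per_f(\bg{G}) \models \psi$ if and only if $G \models \psi(\bd(f^{-1}(j)))_{j \in J}$. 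The right-hand side is exactly $G \models \sigma_f^{-1}(\psi)$ evaluated at $\bd(i)$ for $i = f^{-1}(j)$, i.e.\ $\bg{G} \models \sigma_f^{-1}(\psi)$. This is the standard substitution lemma for the semantics of $\CMSO_2$: renaming free variables consistently with the assignment preserves truth. It is proved by a trivial induction on formulas, because the renaming does not touch quantified variables (named $y$) and so causes no variable capture.

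The conditions on the free variables must also match on both sides. If $\psi$ uses some $x_j$ with $j \notin f(\dom(\bd))$, then $\psi$ is not in $\type^{r,p}(\per_f(\bg{G}))$. It is also not in $\per_f(\type^{r,p}(\bg{G}))$: either $j \notin f(\dom(f))$, which is excluded by definition, or $j = f(i)$ with $i \notin \dom(\bd)$, in which case $\sigma_f^{-1}(\psi)$ uses $x_i$ and is not in the type of $\bg{G}$.

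The only mildly delicate step is this bookkeeping: free variables outside the boundary, and the partiality of $f$. That is the reason for defining $\per_f$ on types via the explicit domain restriction above. Everything else is routine syntax.
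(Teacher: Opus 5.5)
Your proposal is correct and follows the paper's argument. The paper justifies the lemma with the single observation that $\bg{G} \models \varphi$ if and only if $\per_f(\bg{G}) \models \varphi'$, where $\varphi'$ relabels the free vertex variables according to $f$; your explicit definition of $\per_f$ on types, with the restriction to free variables in $f(\dom(f))$ and the case analysis for indices outside $f(\dom(\bd))$, just spells out the bookkeeping for partial $f$ that the paper leaves implicit.
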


\section{Computing a tree decomposition}
In this section we prove \Cref{thm:twappx}, which we now re-state.

\twappxthm*

In order to prove \Cref{thm:twappx}, we  start by proving several lemmas.
The first lemma is the fact that graphs of bounded treewidth have linear neighborhood complexity.
This is well-known (e.g. \cite{DBLP:journals/combinatorica/JoretR24}), but we present a self-contained proof that also obtains the optimal exponential dependence on treewidth.

\begin{lemma}
\label{lem:nbcompl}
Let $G$ be a graph of treewidth $k$, and $A \subseteq V(G)$ a non-empty set.
The number of distinct sets $N(v) \cap A$ with $v \in V(G)$ is at most $2^{k+2} |A|$.
\end{lemma}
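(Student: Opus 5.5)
We bound the number of distinct traces $N(v)\cap A$ via an elimination ordering of width $k$, which exists by the equivalence between treewidth and elimination orderings recalled in the preliminaries. Fix an ordering $v_1,\ldots,v_n$ of width $k$ with filled graph $H$. In this ordering every vertex $v$ has at most $k$ neighbours in $H$ that come later than $v$. The idea is to charge each distinct trace either to a small "anchor" in $A$ or to a small pattern over a bounded set of vertices of $A$. To make the count depend only on $A$, I would first restrict to the graph $G$ itself, where tree decompositions are convenient, rather than passing to a minor.

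The cleaner route is probably the following. Take a tree decomposition of width $k$, and pass to the subgraph $G'$ consisting of $A$ together with all vertices $v\notin A$, keeping only the edges between $v$ and $A$. This does not increase treewidth. Each vertex $v$ contributes the set $N(v)\cap A$.
\begin{itemize}
\item Traces of vertices $v\in A$ contribute at most $|A|$ distinct sets.
\item For $v\notin A$, we analyse the tree decomposition rooted somewhere. Define $t_v$ as the highest node whose bag contains $v$. Then $N(v)\cap A\subseteq \bag(t_v)$ together with the vertices of $A$ appearing in the subtree below $t_v$.
\end{itemize}

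A standard way to obtain a clean count is the elimination ordering restricted so that vertices of $A$ are eliminated last. Order $V(G)\setminus A$ first and then $A$. This ordering may have larger width, so instead I would contract. Since $G-A$ edges are irrelevant, we can contract each non-$A$ vertex $v$ into a neighbour $a\in N(v)\cap A$ with minimal position in an optimal elimination ordering of $G'$. Summarising the intended lemma: in an elimination ordering of width $k$ of $G'$, the neighbours of a non-$A$ vertex $v$ all lie in $A$. Let $a_v$ be the earliest of them. Since $a_v$ and the rest are connected through $v$, the filled graph makes every other member of $N(v)\cap A$ either a later neighbour of $a_v$ in $H$, or else earlier than $v$ yet adjacent to it. The clean version is to order so that every vertex of $A$ is preceded by the non-$A$ vertices. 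We can first eliminate the non-$A$ vertices that are simplicial-like, i.e. they have degree at most $k+1$ into $A$. A cleaner statement: the vertices of $N(v)\cap A$ that come after $v$ form a clique-subset of $v$'s at most $k$ later $H$-neighbours, and those before $v$ lie in the component of eliminated vertices.

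Counting then goes as follows. Charge $v$ to the vertex $a\in A$ that is the earliest element of $N[v]$ in the ordering restricted appropriately. All of $N(v)\cap A$ then lies within $\{a\}$ together with $a$'s at most $k$ later neighbours in $H$, because $v$ connects $a$ to each of them through earlier vertices. The trace is therefore a subset of a fixed set of size at most $k+1$ containing $a$. That gives at most $2^{k+1}$ traces per $a$, hence at most $2^{k+1}|A|$ traces from vertices $v\notin A$. Adding the at most $|A|$ traces from vertices of $A$ (and the empty trace) yields at most $2^{k+2}|A|$.

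The main obstacle is justifying that $v$ connects $a$ to its later $A$-neighbours through vertices earlier than them. This needs $v$ itself to come before those neighbours. So I would choose the elimination ordering of $G'$ by modifying an optimal one: move each non-$A$ vertex $v$, which in $G'$ has neighbours only in $A$, to immediately before its earliest $A$-neighbour. I must check that this move preserves width at most $k$. Equivalently, pick $a$ as the earliest vertex of $N[v]$ overall, and handle the case $a=v$ separately via $v$'s at most $k$ later neighbours.
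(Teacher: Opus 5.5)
There is a genuine gap. You correctly identify the obstacle: the charging needs $v$ to precede its $A$-neighbours. But the proposed fix is false. In general you cannot move a vertex $v\notin A$ in front of all its $A$-neighbours without increasing the width. Take the path $a_1\,v\,a_2$ with $a_1,a_2\in A$, which has treewidth $1$. Any ordering in which $v$ precedes $a_1$ and $a_2$ makes $\{v,a_1,a_2\}$ a triangle in $H$, so the width is $2$. For a star with centre $v\notin A$ and $m$ leaves in $A$, moving $v$ before its earliest $A$-neighbour gives width $m$.

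Your ``equivalent'' alternative is not equivalent, and it treats the wrong case separately. Suppose the earliest vertex $a$ of $N[v]$ lies in $A$. Then the claim that $N(v)\cap A$ is contained in $\{a\}$ together with the later $H$-neighbours of $a$ fails. In the star with ordering $a_1,\dots,a_m,v$, the only later neighbour of $a_1$ is $v$. Conversely, the case $a=v$, which you set aside, is exactly the one where your charging works.

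The missing idea is that the bad case needs no structural claim, only a count. If some $a\in N(v)\cap A$ precedes $v$, then $v$ is itself one of the at most $k$ later $H$-neighbours of $a$. Hence at most $k|A|$ vertices $v$ arise this way, so at most $k|A|$ distinct traces.

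If instead $v$ precedes all of $N(v)\cap A$, then $N(v)\cap A$ lies in the forward neighbourhood of $v$. That set is a clique in $H$, so the trace is a clique of $H[A]$. Charge each nonempty such clique to its earliest vertex $a$; what remains is a subset of the at most $k$ later neighbours of $a$. This gives at most $2^k|A|$ such traces, plus the empty one. Adding the at most $|A|$ traces of vertices in $A$ yields $k|A|+2^k|A|+1+|A|\le 2^{k+2}|A|$.

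This is exactly the paper's argument. There it is phrased via representatives $x_S$, the bipartite subgraph $B$, and ``bad'' vertices, which are the forward neighbours of $A$-vertices.
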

\begin{proof}
Let
\[
 {\cal F}=\{N_G(v)\cap A : v\in V(G)\setminus A\}.
\]
For every $S\in{\cal F}$ choose one representative $x_S\in V(G)\setminus A$ such that
$N_G(x_S)\cap A=S$, and let $X$ be the set of representatives.  Let $B$ be the bipartite
graph with bipartition $(X,A)$ and edge set
\[
 \{xa : x\in X,\ a\in A,\ a\in N_G(x)\}.
\]
Then $B$ is a subgraph of $G[X\cup A]$, and hence $\operatorname{tw}(B)\le k$.

Fix an elimination ordering of $B$ of width at most $k$, and let $B'$ be the filled graph obtained from this elimination ordering.  Thus every vertex has at most $k$ forward neighbors in $B'$, and the forward neighborhood of every vertex is a clique in $B'$.

Call a vertex $x\in X$ {\em bad} if it is a forward neighbor in $B'$ of some vertex of $A$.
Since every $a\in A$ has at most $k$ forward neighbors, the number of bad vertices is at
most $k|A|$.

Now consider a vertex $x\in X$ which is not bad.  If $a\in N_B(x)\cap A$, then $a$ cannot
appear before $x$ in the elimination ordering, because then $x$ would be a forward
neighbor of $a$, making $x$ bad.  Hence every vertex of $N_B(x)\cap A$ is a forward
neighbor of $x$ in $B'$.  Therefore $N_B(x)\cap A$ is a subset of the forward neighborhood
of $x$ in $B'$, and hence it is a clique in $B'[A]$.

It remains to bound the number of cliques in $B'[A]$.  Every nonempty clique $C$ of
$B'[A]$ is charged to its first vertex $a$ in the elimination ordering.  Then
$C\setminus\{a\}$ is contained in the forward neighborhood of $a$, which has size at most
$k$.  Thus for each $a\in A$ there are at most $2^k$ choices for such a clique, and so
$B'[A]$ has at most $2^k|A|$ nonempty cliques.

We can now complete the proof. We have that
\[
 |{\cal F}| \le k|A|+2^k|A|+1.
\]
Finally, vertices of $A$ themselves contribute at most $|A|$ distinct sets $N(v) \cap A$.  Therefore the number of distinct sets $N_G(v)\cap A$ with $v\in V(G)$ is at most
\[
 k|A|+2^k|A|+1+|A| \le 2^{k+2}|A|,
\]
as claimed.
\end{proof}

Then we show that connected graphs with small degree can be partitioned into connected parts of approximately equal size.
This lemma can be considered folklore, but we give a self-contained proof.

\begin{lemma}
\label{lem:treechop}
There is an algorithm that, given a connected graph $G$ with maximum degree $\Delta$ and a parameter $h$ with $|V(G)| \ge h$, in time $\OO(n+m)$ finds a partition $\cP$ of $V(G)$ so that for each $P \in \cP$:
\begin{itemize}
\item $G[P]$ is connected, and
\item $h \le |P| \le h \Delta$.
\end{itemize}
\end{lemma}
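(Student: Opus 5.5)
We reduce to a spanning tree and run a single bottom-up greedy chopping pass. First compute, by BFS or DFS, a rooted spanning tree $T$ of $G$ in time $\OO(n+m)$. Since every vertex of $T$ has at most $\Delta$ neighbours in $G$, each vertex has at most $\Delta$ children in $T$ (the root at most $\Delta$, the others at most $\Delta-1$). Any subset of $V(G)$ that induces a connected subtree of $T$ also induces a connected subgraph of $G$. So it suffices to partition $V(T)$ into parts, each of which is connected in $T$ and has size between $h$ and $h\Delta$. We may assume $\Delta \ge 1$, since otherwise $G$ is a single vertex and $h\le 1$ makes that vertex a valid part.

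Process the vertices of $T$ in post-order and maintain for each processed vertex $v$ a ``pending'' set $R(v)$. This set contains $v$, is connected in $T$, and has size at most $h-1$; it consists of the vertices of the subtree of $v$ that are not yet assigned to an output part.

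When $v$ is processed, its children $c_1,\dots,c_d$ (with $d\le\Delta$) already have pending sets of size at most $h-1$. We scan the children in order and accumulate their pending sets into a running group. As soon as the group, together with $v$, reaches size at least $h$, we output the group together with $v$ as a part. This part is connected because every pending set is attached to $v$.

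The difficulty is that $v$ can belong to only one part, so we cannot keep cutting at $v$. To handle this, we take the simplest rule: once the accumulated size $1+\sum_{j\le i}|R(c_j)|$ first reaches $h$, we do not cut but keep absorbing all remaining children into the same part. The part then has size at most $1+d(h-1)\le 1+\Delta(h-1)\le h\Delta$. If the total never reaches $h$, the whole union becomes $R(v)$, which has size at most $h-1$ and is connected. This yields the invariant, and each vertex is handled in time proportional to its degree in $T$, which gives $\OO(n+m)$ overall.

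The main obstacle is the leftover at the root: $R(\text{root})$ may be nonempty with size less than $h$. Since $|V(G)|\ge h$, at least one part has already been output unless the leftover is everything, and the latter is impossible because $|V(G)|\ge h$ would force a part at the root. The leftover is connected and contains the root, and it must be adjacent in $T$ to some output part. The reason is that its vertices' children that are not in it had their subtrees either output or absorbed into parts containing those children.

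We merge the leftover into one adjacent part $P$. Bounding the size requires care, because $|P|+h-1$ might exceed $h\Delta$. Here I would refine the cutting rule so that every output part actually has size at most $h(\Delta-1)+1$, or argue that for $\Delta\ge2$ the merged size $\le 1+\Delta(h-1)+(h-1)$ can be kept within $h\Delta$. The cleaner refinement I would try first is to choose, at the root, to keep the root's accumulated group unfinished and absorb it. Concretely, at the root we never leave a remainder: all children pending sets are absorbed into the root's part. If the total at the root is less than $h$, we instead merge it into the part containing a grandchild branch. The merge-size bookkeeping here is the step I expect to need the most checking.
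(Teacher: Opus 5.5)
Your cutting rule is the same as the paper's: cut at a lowest vertex whose not-yet-assigned subtree has at least $h$ vertices while each child's has fewer than $h$. However, you leave the only delicate step open, namely the size bound after absorbing the root leftover. The second option you float cannot work: $1+\Delta(h-1)+(h-1)=h\Delta+h-\Delta$, which exceeds $h\Delta$ whenever $h>\Delta$.

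The missing observation is one you already stated in your first paragraph but never used. Every part output before the root is formed at a non-root vertex, which has at most $\Delta-1$ children in $T$. So under your rule as it stands, such a part has at most $1+(\Delta-1)(h-1)$ vertices, and no refinement of the rule is needed. Now suppose the root leaves a leftover $L$ with $|L|\le h-1$. Then $L$ is an upward-closed subtree and $L\neq V(T)$, so there is a vertex $c\notin L$ whose parent lies in $L$. The parent absorbed all pending sets of its children, so $c$ was not pending; that is, $c$ is the top of an output part $P$. Then $P\cup L$ is connected and
\[
h\le |P\cup L|\le 1+(\Delta-1)(h-1)+(h-1)=1+\Delta(h-1)\le h\Delta .
\]
The ``grandchild branch'' fallback is unnecessary.

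For comparison, the paper sidesteps the leftover by cutting only while the remaining tree has more than $h\Delta$ vertices. A cut subtree has at most $(h-1)\Delta+1<h\Delta$ vertices (for $\Delta\ge 2$), so its top is not the root. Hence it has at most $(h-1)(\Delta-1)+1\le h(\Delta-1)$ vertices, and the remainder keeps at least $h$ vertices. The remainder becomes the final part once it has at most $h\Delta$ vertices. Both versions rest on the same fact that non-root vertices have at most $\Delta-1$ children. With the fix above, your one-pass variant is a complete proof.
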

\begin{proof}
We can assume without loss of generality that $G$ is a tree.
The cases of $\Delta \le 1$ and $h \le 1$ are trivial, so we can assume $\Delta \ge 2$ and $h \ge 2$.

To prove that such a partition exists, we proceed via induction.
The base case is that $G$ has at most $h \Delta$ vertices, in which case we can take a single set $P = V(G)$.
Then, if $G$ has more than $h \Delta$ vertices, suppose that $G$ is a rooted tree, and let $v$ be a vertex, so that the subtree rooted at $v$ has $\ge h$ vertices, but all subtrees rooted under $v$ have $< h$ vertices.
Because $v$ has degree $\le \Delta$, the subtree rooted at $v$ has $\le (h-1) \Delta + 1 < h \Delta$ vertices, implying that it is not the root and in fact has at most $\Delta-1$ children, so it has $\le (h-1) (\Delta-1)+1 \le h (\Delta-1)$ vertices.
After removing this subtree as one set of $P$, the remaining tree is connected and has $\ge h$ vertices, so the induction assumption can be applied to it.

The above induction proof can be implemented in $\OO(n)$ time via depth-first-search.
\end{proof}

The next lemma encapsulates the key step of our algorithm that reduces the size of the input graph.
In particular, the first step of our algorithm is to contract sets of vertices of size $2^{\OO(k)}$ so that the size of the graph is reduced by a factor of $2^{\Omega(k)}$, but treewidth is not increased.
This is done by the following algorithm.

\begin{lemma}
\label{lem:findpartition}
There is an algorithm that, given an $n$-vertex $m$-edge graph $G$ and integers $k$ and $h$, in time $\OO(n+m)$ either concludes that $\tw(G) > k$, or finds a partition $\cP$ of $V(G)$ so that
\begin{itemize}
\item $|V(G \contr \cP)| \le |V(G)|/h$,
\item $\tw(G \contr \cP) \le \tw(G)$, and
\item for all $P \in \cP$, $|P| \le h^3 \cdot 2^{\OO(k)}$.
\end{itemize}
\end{lemma}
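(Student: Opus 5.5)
The plan is to split $V(G)$ into high-degree and low-degree vertices. Only two kinds of merging will be used, and neither increases treewidth: contracting connected sets (which takes a minor) and merging pairwise non-adjacent vertices with identical neighborhoods (false twins, where merging is the same as deleting all but one of them). First, if $m > kn$ we conclude $\tw(G) > k$, since graphs of treewidth $\le k$ have fewer than $kn$ edges. Otherwise fix a degree threshold $D = c\cdot k\, h\, 2^{k}$ for a suitable constant $c$. Let $H$ be the set of vertices of degree $> D$ and $L = V(G)\setminus H$. Since $m \le kn$, we get $|H| \le 2kn/D$, which is $n/(h\,2^{k}\cdot\Omega(1))$. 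All vertices of $H$ become singleton parts.

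Next, compute the connected components of $G[L]$, where every vertex has degree $\le D$.
\begin{itemize}
\item \emph{Large components} (at least $h$ vertices): apply \Cref{lem:treechop} with parameters $h$ and $\Delta = D$. This yields connected parts of size between $h$ and $hD = h^2 2^{\OO(k)}$, so there are at most $n/h$ of them overall.
\item \emph{Small components} (fewer than $h$ vertices): each small component $C$ satisfies $N(C) \subseteq H$. Group the small components by the set $N(C)$. Within a group with at least $h$ members, split the members into chunks of between $h$ and $2h-1$ components, and make each chunk a part, of size $< 2h^2$. A group with fewer than $h$ members becomes a single part of size $< h^2$.
\end{itemize}

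To bound the number of parts, let $G'$ be $G$ with every small component contracted to a single vertex. This is a minor of $G$, so $\tw(G') \le \tw(G) \le k$ (if $\tw(G)>k$ the algorithm may conclude so; we only need correctness when $\tw(G) \le k$). \Cref{lem:nbcompl} applied to $G'$ with $A = H$ shows there are at most $2^{k+2}|H| + 1$ distinct sets $N(C)$; the $+1$ accounts for $N(C)=\emptyset$ and for the case $H=\emptyset$. Hence the number of parts is at most
\[
n/h + n/h + 2^{k+2}|H| + 1 + |H| = \OO(n/h),
\]
using $|H| \le 2kn/D$ and the choice of $c$. Running the whole construction with $h' = \Theta(h)$ in place of $h$ gives $|V(G\contr\cP)| \le n/h$, and the part sizes remain $h^2 2^{\OO(k)} \le h^3 2^{\OO(k)}$. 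If the count of distinct neighborhoods exceeds the bound of \Cref{lem:nbcompl}, the algorithm concludes $\tw(G)>k$.

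For the treewidth claim, $G \contr \cP$ is obtained in two stages. First contract each large-component part and each small component separately; all these sets are connected, so this is a minor operation. The contracted vertices coming from one group of small components are then pairwise non-adjacent, because they are distinct components of $G[L]$, and they all have the same neighborhood $N(C)\subseteq H$. Contracting the set of such vertices in a chunk therefore yields exactly the graph obtained by deleting all but one of them, an induced subgraph. This stage also does not increase treewidth, so $\tw(G\contr\cP)\le\tw(G)$.

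The main obstacle I expect is grouping the small components by $N(C)$ in $\OO(n+m)$ time. For this I would compute each $N(C)$ as a list of vertices of $H$, sorted by vertex index via one global bucket sort of all pairs (component, neighbor), which costs $\OO(n+m)$ total since $\sum_C |N(C)| \le m$. The lists would then be grouped by lexicographic radix sorting, in the standard linear-time manner for sorting variable-length strings over the alphabet $[n]$. Everything else (degrees, components, \Cref{lem:treechop}, chunking) is linear.
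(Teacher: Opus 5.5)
Your proof is correct and follows essentially the same route as the paper: after the $m \le kn$ check, high-degree vertices become singletons, \Cref{lem:treechop} chops the large components of the low-degree part, and small components are grouped by their neighborhood in $H$ (bounded via \Cref{lem:nbcompl}) and merged as false twins. The only real deviation is that you make each undersized neighborhood class a single part rather than splitting it into singletons, which lets your degree threshold be linear instead of quadratic in $h$ and gives parts of size $h^2 2^{\OO(k)}$; your extra $+1$ also correctly covers the case $H=\emptyset$, which the paper's test against $2^{k+2}|H|$ overlooks.
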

\begin{proof}
First, if $m > k n$, we can conclude that $\tw(G) > k$.
Assume then that $m \le k n$.

Let $d = h^2 \cdot k 2^{k+6}$, and let $H \subseteq V(G)$ be the vertices with degree $\ge d$.
We have that $|H| \le \frac{2m}{d} \le \frac{n}{h^2 2^{k+5}}$.
The vertices in $H$ will be placed as singleton parts to the partition $\cP$.
We compute the connected components of $G \setminus H$, and say that a component $C$ is \emph{large} if $|C| \ge 2h$, and \emph{small} otherwise.

For each large component $C$, we let $\cP_C$ be the partition of $C$ obtained by applying \Cref{lem:treechop} with the parameters $2h$ and $d$.
In particular, each part in $\cP_C$ has at least $2h$ and at most $2hd \le h^3 \cdot k 2^{k+7}$ vertices.
We note that contracting all such parts does not increase the treewidth of $G$, because they are connected subgraphs.

We group the small components $C$ by their neighborhoods $N(C)$.
As the sum of $|N(C)|$ is at most $m$, this can be done by radix sort in $\OO(n+m)$ time.
By applying \Cref{lem:nbcompl} after contracting each component $C$, we get that the number of neighborhoods is $\le 2^{\tw(G)+2} |H|$.
Therefore, if it is more than $2^{k+2} |H|$, we return that $\tw(G) > k$.

Let $X$ be a neighborhood and $\mathcal{C}_X$ the collection of small components $C$ with $N(C) = X$.
We say that $X$ is a \emph{small neighborhood} if $\sum_{C \in \mathcal{C}_X} |C| \le 2h$ and \emph{large neighborhood} otherwise.
The total number of vertices in components with small neighborhoods is at most $2h \cdot 2^{k+2}|H| \le 2h \cdot 2^{k+2} \frac{n}{h^2 2^{k+5}} \le \frac{n}{4h}$.
We will ignore those vertices, i.e., place them as singletons to $\cP$.

Let $X$ be a large neighborhood.
Now, for each $C \in \mathcal{C}_X$ we have $|C| \le 2h$, but $\sum_{C \in \mathcal{C}_X} |C| > 2h$.
Therefore, we can group components in $\mathcal{C}_X$ so that each group has between $2h$ and $6h$ vertices.
We put these groups to the partition $\cP$.
Even though these groups do not necessarily form connected subgraphs, their contraction does not increase treewidth because their contraction corresponds to contracting each component in the group, and then deleting all but one of the resulting vertices.

It remains to show that $|V(G \contr \cP)| \le |V(G)|/h$.
Each part in $\cP$ is either a singleton vertex, or has at least $2h$ vertices.
The number of singleton parts is at most $|H|$ plus the number of vertices in small components with small neighborhoods, which is $|H| + \frac{n}{4h} \le \frac{n}{2h}$.
The number of parts with at least $2h$ vertices is at most $\frac{n}{2h}$.
Therefore, the total number of parts is at most $n/h$.
\end{proof}

Next we give another folklore lemma about simplifying tree decompositions.

\begin{lemma}
\label{lem:simplifytd}
There is an algorithm that, given a tree decomposition $\cT$ of an $n$-vertex graph $G$ of width $k$, in time $\OO(|\cT| + nk)$ returns a tree decomposition $\cT' = (T',\bag')$ of $G$ that has width $k$, $|V(T')| \le 2n$, and maximum degree $3$.
\end{lemma}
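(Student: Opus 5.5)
We will obtain $\cT'$ from $\cT = (T,\bag)$ in three passes, each linear in its input. The passes remove redundant nodes, cap the number of nodes, and then reduce the degree.

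\textbf{Step 1: remove nested bags.} Root $T$ arbitrarily. Whenever an edge $tt'$ has $\bag(t) \subseteq \bag(t')$, contract the edge: merge $t$ into $t'$, keeping the bag $\bag(t')$. This preserves both decomposition properties and the width. Doing this naively requires subset tests. Instead, we first make each bag's vertex list sortable by bucketing all pairs $(t,v)$ by node, in time $\OO(|\cT|)$. For a tree edge from child $t$ to parent $t'$, we test $\bag(t) \subseteq \bag(t')$ using a global marker array indexed by vertex. We mark $\bag(t')$, scan $\bag(t)$, and unmark; this costs $\OO(|\bag(t)|+|\bag(t')|)$ per edge. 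Since each bag is charged once as a child and once per child edge as a parent, we need care. We handle this by processing the nodes in DFS order and marking the parent's bag once while scanning all of its children. The total cost is then $\OO(|\cT|)$.

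Contracting a child into its parent can create new containments only with the parent's bag, which is unchanged. Contracting a parent into a child is handled symmetrically: replace the parent's bag by the child's bag. To keep the argument simple, I would only test child-to-parent containment and then repeat the pass in the reverse direction. I expect the main obstacle to be showing that the result has no adjacent pair with $\bag(t)\subseteq\bag(t')$. For that, the standard argument does the following: root the tree, and observe that after eliminating all such pairs, every node $t$ other than the root has a vertex $v_t \in \bag(t)\setminus\bag(\mathrm{parent}(t))$. Such a $v_t$ has $t$ as the topmost node of its subtree, so the map $t \mapsto v_t$ is injective. Hence $|V(T)| \le n$.

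\textbf{Step 2: trim bag sizes.} After Step 1 there are at most $n$ nodes, each with a bag of size at most $k+1$. We rebuild the representation so that its size is $\OO(nk)$. Step 1 already only ever kept original bags, so this is automatic.

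\textbf{Step 3: reduce the degree.} For each node $t$ with children $c_1,\dots,c_d$ where $d\ge 3$, replace $t$ by a path $t_1,\dots,t_{d-1}$ of copies, each with bag $\bag(t)$. Attach the children so that every copy has at most $3$ neighbours: $t_i$ gets the child $c_i$, $t_{d-1}$ gets $c_{d-1}$ and $c_d$, and $t_1$ keeps the parent edge. The decomposition properties are preserved, since the copies form a connected path carrying identical bags. The number of added nodes is $\sum_t (\deg(t)-2) \le |V(T)|$, so the node count is at most $2n$. The added bag volume is at most $(k+1)\cdot n$, so this step runs in $\OO(nk)$ time.

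Altogether the running time is $\OO(|\cT|+nk)$, the width is unchanged, $|V(T')|\le 2n$, and $T'$ has maximum degree $3$.
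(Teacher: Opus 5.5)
Your proposal is correct and is essentially the paper's argument: contract each child whose bag is contained in its parent's bag, bound the surviving nodes by mapping each one to a vertex whose topmost (home) node it is, and then replace high-degree nodes by bounded-degree trees of copies of their bag. The paper just finds the contractible (non-home) nodes directly from each vertex's minimum-depth bag instead of by marker-array subset tests, and your reverse-direction pass and Step~2 are not needed. One small slip: with an arbitrary root, your injection $t\mapsto v_t$ only covers non-root nodes, so you get $|V(T)|\le n+1$ (e.g.\ when the root bag is empty); as the paper does, root $T$ at a node with a nonempty bag, whose vertices are then free to be assigned to the root.
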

\begin{proof}
We first describe transforming $\cT$ into having at most $n$ nodes, and then transforming that into maximum degree $3$.

Denote $\cT = (T, \bag)$.
We root $T$ at an arbitrary node $r \in V(T)$ so that $\bag(r)$ is non-empty.
Consider then the following process: As long as $\cT$ contains a node $t$ with parent $p$ so that $\bag(t) \subseteq \bag(p)$, we contract the edge $tp$, and let the resulting node have bag equal to $\bag(p)$.
Let $\cT' = (T',\bag')$ be the resulting tree decomposition.
Obviously, $\cT'$ is a tree decomposition of $G$ and has width at most $k$.
It remains to prove that the contraction process can be implemented in $\OO(|\cT|)$ time, and that the resulting tree decomposition has at most $n$ nodes.

We say that the \emph{home-node} of a vertex $v$ is the node $t$ of $T$ closest to the root such that $v \in \bag(t)$.
We note that $\cT'$ is obtained from $\cT$ by contracting each node that is not a home-node of any vertex to its parent.
Because at most $n$ nodes can be home-nodes, it follows that $|V(T')| \le n$.
Furthermore, $\cT'$ is easy to compute in $\OO(|\cT|)$ time after finding which nodes are home-nodes of at least one vertex.
Such nodes can be found in $\OO(|\cT|)$ time by recording for each vertex $v$ the depth of the lowest-depth node whose bag contains $v$.

Finally, we transform the resulting tree decomposition into maximum degree $3$ by replacing high-degree nodes by binary trees.
This runs in time $\OO(nk)$ and increases the number of nodes at most by a factor of $2$.
\end{proof}

The following lemma is the other key ingredient of \Cref{thm:twappx}.
It shrinks the number of bags of a tree decomposition by a given factor $h$, while increasing the sizes of the leaf bags by a factor of $\OO(h)$ and the sizes of the non-leaf bags by only a constant factor.

\begin{lemma}
\label{lem:compresstd}
There is an algorithm that, given a tree decomposition $\cT$ of an $n$-vertex graph $G$ of width $k$ and an integer $h$, in time $\OO(|\cT| + nk)$ returns a tree decomposition $\cT' = (T', \bag')$ of $G$ so that
\begin{itemize}
\item each leaf bag of $\cT'$ has size at most $\OO(h k)$,
\item each non-leaf bag of $\cT'$ has size at most $2k+2$, and
\item $|V(T')| \le n/h$.
\end{itemize}
\end{lemma}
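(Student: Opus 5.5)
First apply \Cref{lem:simplifytd} to $\cT$, obtaining in time $\OO(|\cT|+nk)$ a tree decomposition $\cT_1=(T_1,\bag_1)$ of width $k$ with $|V(T_1)|\le 2n$ and maximum degree $3$. Root $T_1$ at an arbitrary node. If $h$ is at most a small constant, then $|V(T_1)|$ may exceed $n/h$, so we first make the trivial adjustment of treating every bag as a leaf. In general the plan is a standard tree-partitioning step: choose a set $M\subseteq V(T_1)$ of \emph{marked} nodes such that every connected component of $T_1-M$ has at most $\OO(h)$ nodes and is adjacent to at most two marked nodes. Then components adjacent to exactly one marked node become leaves, and the marked nodes (plus short connectors) become the internal skeleton.

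To choose $M$, I would run a bottom-up DFS. Keep a counter for the size of the current unmarked subtree hanging below a node, and mark a node $t$ as soon as the unmarked part of its subtree reaches at least $4h$ nodes. Because the degree is at most $3$, each child contributes fewer than $4h$ nodes, so every component of $T_1-M$ has fewer than $8h$ nodes. Each marked node accounts for at least $4h$ distinct nodes, so $|M|\le 2n/(4h)=n/(2h)$. Next close $M$ under lowest common ancestors. In a binary-ish tree this at most doubles $|M|$, so $|M|\le n/h$ up to adjusting constants. After the closure, every component of $T_1-M$ has at most two neighbors in $M$: its parent marked node and at most one marked node below it. This closure is computable in linear time by a second DFS.

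Now build $\cT'$. For each marked node $t$, create a non-leaf node with bag $\bag_1(t)$. For each component $D$ of $T_1-M$ with two marked neighbors $t_{\mathrm{up}}$ above and $t_{\mathrm{down}}$ below, the path between them through $D$ must be bridged by a single node. Give that node the bag $\bag_1(t_{\mathrm{up}})\cup\bag_1(t_{\mathrm{down}})$, which has size at most $2k+2$ and correctly separates, since every vertex appearing both above and below lies in both bags. Attach that node between the two marked nodes. Then attach all of $D$ as a single leaf whose bag is $\bigcup_{s\in D}\bag_1(s)$, of size at most $8h(k+1)=\OO(hk)$, hanging off the connector node (or off $t_{\mathrm{up}}$ if $D$ has only one marked neighbor). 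Contracting $D$ to one node of $T_1$ preserves the tree decomposition properties, so the leaf is valid. Hanging it off the connector whose bag contains $\bag_1(t_{\mathrm{up}})\cup\bag_1(t_{\mathrm{down}})$ keeps connectivity, because every vertex of the union of $D$ that also appears outside $D$ lies in one of the two boundary bags. The total number of nodes is $\OO(|M|)$ plus the number of components, which is also $\OO(|M|)$. Shrinking constants in the marking threshold (using $ch$ instead of $4h$) gives $|V(T')|\le n/h$. The total work is $\OO(|V(T_1)|\,k)=\OO(nk)$.

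The main obstacle is the bookkeeping guaranteeing that non-leaf bags stay at size at most $2k+2$ while the node count is strictly at most $n/h$ rather than $\OO(n/h)$. This requires choosing the marking threshold large enough to absorb the LCA closure, the connector nodes and the leaf nodes. A second concern is that a marked node must never become a leaf in $\cT'$. If one does, its bag has size at most $k+1\le\OO(hk)$, which is still fine.
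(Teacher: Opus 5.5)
Your proposal is correct and follows the paper's proof essentially step for step. Both arguments reduce to a degree-$3$ decomposition with at most $2n$ nodes and pick $\OO(n/h)$ nodes whose removal leaves components of $\OO(h)$ nodes. Both take the LCA-closure so that each component has at most two chosen neighbors, and replace each component by a connector node with the union of those neighbors' bags plus a leaf with the union of the component's bags. The paper makes your constant explicit with threshold $p=28h$, giving $4n/p+24n/p\le n/h$.

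The only real difference is how the initial node set is chosen. You use greedy bottom-up marking, while the paper takes the top nodes of the parts produced by \Cref{lem:treechop}; this amounts to the same thing. Your special handling of small $h$ is unnecessary, since the general count already works for every $h\ge 1$.
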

\begin{proof}
We start by applying \Cref{lem:simplifytd}, after which we assume that the input tree decomposition $\cT = (T,\bag)$ has $|V(T)| \le 2n$ and maximum degree $3$.

\begin{claim}
\label{lem:compresstd:claim1}
For an integer $p \ge 1$, we can find in $\OO(n)$ time a set $X \subseteq V(T)$ of nodes of size $|X| \le 4n/p$, so that $T \setminus X$ has $\le 12n/p$ connected components, and each connected component $C$ of $T \setminus X$ has $|C| \le 3p$ nodes and at most $2$ neighbors in $X$.
\end{claim}
\begin{claimproof}
We apply \Cref{lem:treechop} to $T$ with the parameters $3$ and $p$, and obtain a partition $\cP$ of $V(T)$ into connected parts so that each part has at least $p$ and at most $3p$ nodes.
We root $T$ at an arbitrary node, and construct a set $Y$ by taking the lowest-depth node of each part $P \in \cP$.
We have that $|Y| \le 2n/p$ and each connected component of $T \setminus Y$ is a subset of a part $P \in \cP$.

Then, we construct $X$ by taking the \emph{LCA-closure} of $Y$, that is, $X$ is the superset of $Y$ constructed by adding each node of $T$ that is the least common ancestor of a pair of nodes from $Y$.
We have that $|X| \le 2|Y| \le 4n/p$.
Furthermore, this guarantees that each component of $T \setminus X$ has at most $2$ neighbors in $X$.

Because the maximum degree of $T$ is $3$, the number of connected components of $T \setminus X$ is at most $3|X| \le 12n/p$.
\end{claimproof}

We apply \Cref{lem:compresstd:claim1} with $p = 28h$.
To construct $\cT' = (T', \bag')$, we first contract each connected component $C$ of $T \setminus X$ into a single node $v_C$.
Then, for each such node we add an adjacent leaf $v_C'$.
For each node $t \in X$, we set $\bag'(t) = \bag(t)$.
For each node $v_C$, we set $\bag'(v_C)$ to be the union of the bags of the adjacent nodes in $X$.
As there are at most two of such nodes, $|\bag'(v_C)| \le 2 (k+1)$.
For each node $v_C'$, we set $\bag'(v_C')$ to be the union of the bags of the nodes in $C$, i.e., $\bag'(v_C') = \bigcup_{u \in C} \bag(u)$.
As $|C| \le 3p$ and $p = 28h$, we have $|\bag'(v_C')| \le \OO(h k)$.

This construction is indeed a tree decomposition of $G$, and satisfies the two required bag size bounds.
It remains to prove that $|V(T')| \le n/h$.
We have that $|V(T')|$ is $|X|$ plus two times the number of connected components of $T \setminus X$.
Therefore, $|V(T')| \le 4n/p + 24 n/p \le n/h$.
\end{proof}

We also need the following algorithm for computing treewidth.

\begin{theorem}[\cite{DBLP:conf/focs/Korhonen21}]
\label{thm:tw2appx}
There is an algorithm that, given an $n$-vertex $m$-edge graph $G$ and an integer $k$, in time $2^{\OO(k)} n + \OO(m)$ either returns a tree decomposition of $G$ of width $\le 2k+1$, or determines that $\tw(G) > k$.
\end{theorem}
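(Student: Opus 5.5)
This statement is quoted from~\cite{DBLP:conf/focs/Korhonen21}, so the plan is to reduce to that result and recall the structure of its proof. The only addition needed for the stated bound is handling the $\OO(m)$ term. We first scan the edge list. If $m > kn$ we may immediately report $\tw(G) > k$, since every graph of treewidth at most $k$ has fewer than $kn$ edges; this scan costs $\OO(m)$. Otherwise $m \le kn$, and every later step that touches edges costs only $\OO(kn) \subseteq 2^{\OO(k)} n$. So it suffices to give a $2^{\OO(k)} n$ algorithm on graphs with at most $kn$ edges.

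For that algorithm we would follow Bodlaender's recursive compression scheme, with Korhonen's improvement step replacing the exact dynamic programming. The first step is to find, in $\OO(kn)$ time, either a matching of size $\Omega(n/k)$ or many vertices of small degree whose neighbourhoods are simplicial-like. In the matching case we contract it and recurse on a graph with at most $(1 - \Omega(1/k))n$ vertices; if the recursive call reports large treewidth, we report it too, since contraction does not increase treewidth. Uncontracting the returned decomposition then gives width at most $4k+3$. In the other case we remove the low-degree vertices, recurse, and reinsert them. The recursion depth is geometric, so the total time is $\OO(k)$ times the cost of one level.

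The central step takes a decomposition of width $\OO(k)$ and returns one of width at most $2k+1$, or concludes $\tw(G) > k$, in $2^{\OO(k)} n$ time. We would repeatedly pick a bag $W$ of size larger than $2k+2$. For this bag we find a partition $(C_1,C_2,C_3,X)$ of $V(G)$ with $|X| \le k+1$, no edges between distinct $C_i$, and each $C_i$ containing at most half of $W$. Such a partition is found by a $2^{\OO(k)}$-state dynamic program over the current decomposition, restricted to the subtree touched by the operation; if it does not exist, then $\tw(G) > k$. We then split the decomposition along $X$, which strictly decreases a potential that counts oversized bags, weighted exponentially by their size.

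The main obstacle is the amortization: showing that each split, together with its dynamic programming, costs time proportional to the decrease of the potential. This requires the dynamic program to run only on the part of the tree that the split affects. We would first try Korhonen's argument directly, namely that the new decomposition is obtained by local surgery on the subtree where $X$ is needed, and that bags there shrink or move into components whose total potential drops by a constant factor.
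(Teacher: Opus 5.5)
Your proposal is correct and matches the paper, which does not reprove this statement but imports it directly from Korhonen's result~\cite{DBLP:conf/focs/Korhonen21}. Your preliminary scan that rejects $m > kn$ is the natural way to account for the additive $\OO(m)$ term, and it is the same check the paper uses at the start of \Cref{lem:findpartition}. Your outline of Korhonen's internals is only a loose recollection and is not needed here, since the result is cited. For instance, Bodlaender-style compression guarantees a matching of size $n/\mathrm{poly}(k)$ rather than $\Omega(n/k)$; this only changes the overhead factor.
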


Now we are ready to put all of the above ingredients together to prove \Cref{thm:twappx}.

\twappxthm*
\begin{proof}
Let $f(k)$ be a function in $2^{\OO(k)}$ so that the algorithm of \Cref{thm:tw2appx} runs in time $f(k) n + \OO(m)$.
We assume that $f(k) > \max(k,4)$.
We start by applying the algorithm of \Cref{lem:findpartition} with the parameters $k$ and $h = f(k)$.
It either concludes that $\tw(G) > k$, or finds a partition $\cP$ of $V(G)$ so that $|\cP| \le n/h$, $\tw(G \contr \cP) \le \tw(G)$, and for all $P \in \cP$, $|P| \le h^3 \cdot 2^{\OO(k)} \le 2^{\OO(k)}$.
Let us denote by $g(k)$ the bound such that $|P| \le g(k)$.

Let $G' = G \contr \cP$.
We can construct $G'$ in $\OO(n+m)$ time with the help of radix sort.
Note that $|E(G')| \le m$.
We use the algorithm of \Cref{thm:tw2appx} to either conclude that $\tw(G') > k$, in which case we can conclude $\tw(G) > k$, or find a tree decomposition $\cT$ of $G'$ of width $\le 2k+1$.
It runs in time $f(k) |V(G')|+\OO(|E(G')|) \le \OO(n+m)$.
Note that this implies that $|\cT| \le \OO(n+m)$.

We apply the algorithm of \Cref{lem:compresstd} with the tree decomposition $\cT$ and the parameter $h = \max(s, k \cdot g(k))$.
It runs in time $\OO(|\cT| + |V(G')|k) = \OO(n+m)$, and returns a tree decomposition $\cT' = (T',\bag')$ of $G'$, whose leaf bags have size $\le \OO(h k)$, non-leaf bags have size $\le \OO(k)$, and which has $|V(T')| \le |V(G')|/h \le n/s$.

Now, we construct a tree decomposition $\cT''$ of $G$ from $\cT'$ by replacing each vertex corresponding to a part $P \in \cP$ by the set of vertices $P$.
It is easy to see that this indeed yields a tree decomposition of $G$, whose leaf bags have size $\le \OO(h \cdot k \cdot g(k)) \le s \cdot 2^{\OO(k)}$ and non-leaf bags size $\le \OO(k \cdot g(k)) \le 2^{\OO(k)}$.

The running time of this replacement operation is linear in the size $|\cT''|$ of the resulting tree decomposition, so it suffices to bound $|\cT''|$.
Each vertex that occurs in more than one bag occurs in a non-leaf bag, so we have that $|\cT''| \le |V(T')| + n + |V(T')| \cdot \OO(k \cdot g(k)) \le n + (n/h) \cdot \OO(k \cdot g(k)) \le \OO(n)$.
\end{proof}

\section{$\CMSO_2$ model checking}
The following is a version of \Cref{lem:courcellethm} that is more suitable for different applications.
Note that here we view $G$ as a $0$-boundaried graph.

\begin{restatable}{theorem}{typecourcelle}
\label{thm:typecourcelle}
There is an algorithm that, given an $n$-vertex $m$-edge graph $G$ and integers $k,r,p$, returns either $\tw(G) > k$, or $\type^{r,p}(G)$, in time $\OO(n+m) + f(k,r,p)$, for a computable function $f$.
\end{restatable}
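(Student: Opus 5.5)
We follow the sketch from the introduction. First apply \Cref{thm:twappx} with a value of $s$ chosen below. This either reports $\tw(G) > k$, which we return, or produces a tree decomposition $\cT = (T,\bag)$ with $|V(T)| \le n/s$, non-leaf bags of size at most $w := 2^{\OO(k)}$, and leaf bags of size at most $s\cdot w$. Root $T$ at a non-leaf node; if $T$ is a single node, all of $G$ has at most $sw$ vertices and falls under the leaf treatment below. Let $b = 2w$ be the boundary size. To each node $t$ we associate the $b$-boundaried graph $\bg{G_t}$ whose graph is induced by the vertices of bags in the subtree of $t$. We choose the edge sets so that every edge of $G$ is assigned to exactly one node, namely the highest node whose bag contains both endpoints; the subgraph $\bg{G_t}$ then contains exactly the edges assigned within the subtree of $t$. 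Vertices of $\bag(t)$ receive boundary labels from $[w]$ by an arbitrary injective map $\lambda_t$. The goal is to compute $\type^{r,p}(\bg{G_t})$ bottom-up and return the type of the root after forgetting all labels, which is a permutation by the empty partial function and so is covered by \Cref{lem:courcellepermute}.

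For a non-leaf node $t$ with children $c_1,\dots,c_d$, we proceed as follows. Let $\bg{B_t}$ be the boundaried graph on $\bag(t)$ carrying the edges assigned to $t$. Relabel each child's type with \Cref{lem:courcellepermute} via the partial map that sends $\lambda_{c}(v)$ to $\lambda_t(v)$ for $v\in\bag(c)\cap\bag(t)$ and is undefined otherwise. Since the child's vertices outside $\bag(t)$ must be forgotten, this is exactly a restriction of the domain. Then glue these types with the type of $\bg{B_t}$ using \Cref{lem:courcellejoin}. The one subtlety is that labels in $[w]$ suffice, while $b=2w$ leaves spare room; with exact labels on the intersections, $[w]$ alone is in fact enough. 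Each such step costs $h_1(k,r,p)\cdot(d+1)$ time, since types live in a finite set computable from $b,r,p$. We precompute the tables for $\oplus$ and $\per_f$ once in time depending only on $k,r,p$. The total cost over non-leaf nodes is $h_1(k,r,p)\cdot n/s$. This is $\OO(n)$ provided $s \ge h_1(k,r,p)$, so we set $s=\min(n,h_1(k,r,p))$. If $n < h_1$, the whole graph is small and can be handled by brute force in $f(k,r,p)$ time.

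For a leaf $\ell$ with parent $q$, the boundaried graph $\bg{G_\ell}$ has at most $N := s w$ vertices, with $\bag(\ell)\cap\bag(q)$ labelled. Its type depends only on its isomorphism class. We precompute, in time $g(N,r,p)$, a table indexed by a canonical encoding of every $w$-boundaried graph on vertex set $[N']$ for $N'\le N$ (adjacency matrix plus boundary map), storing its type computed by brute-force model checking. The main obstacle is to read off the entry for each leaf in time linear in $|\bg{G_\ell}|$ rather than $\mathrm{poly}(N)$. The total leaf size is $\OO(n+m)$, but $N$ is a function of $k$, so $\mathrm{poly}(N)$ per leaf would yield $n/s\cdot \mathrm{poly}(sw)$, which is not linear.

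To handle this we avoid canonical forms. Relabel the vertices of each leaf as $1,\dots,|\bag(\ell)|$ in an arbitrary order; doing this for all leaves simultaneously takes $\OO(n+m)$ with a global array and radix sort. Then encode the edge list, now a set of pairs over $[N]$, together with the boundary map as a sorted sequence of small integers. Any labelling gives an isomorphic graph, so any of them may index the table; we only need the table to contain every labelled graph. We store the table as a trie keyed by the sorted edge sequence, where each symbol lies in $[N^2]$ and children are arrays of size $N^2$. Its size is a function of $k,r,p$ only, and a lookup takes time proportional to the number of edges plus the boundary size. Sorting the edges of all leaves simultaneously by radix sort with keys bounded by $N^2\le n^2$ keeps the total at $\OO(n+m)$. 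The point needing care is ensuring this joint radix sort and the per-leaf relabelling run in truly linear time in the word-RAM model, with no $\log$ or $N$ factors.
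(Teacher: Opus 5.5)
Your treatment of the decomposition and the non-leaf nodes is the paper's argument. You apply \Cref{thm:twappx}, root at a non-leaf node, and assign each edge to the highest bag containing both endpoints (the paper's edge-annotation). You then combine children bottom-up via \Cref{lem:courcellepermute} and \Cref{lem:courcellejoin}, with a parameter-dependent cost per node that is absorbed by $|V(T)| \le n/s$.

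Where you differ is the leaves. You follow the introduction's sketch literally: precompute the type of \emph{every} labelled boundaried graph on at most $N$ vertices, store the results in a trie keyed by the sorted edge list, and make each leaf a lookup after a global radix sort. The paper's \Cref{lem:tabulation} avoids both the exhaustive table and the sorting. After relabelling each leaf graph into a small range, it lists the vertices, the edges (in arbitrary order) and the boundary as a short tuple. It reads this tuple as one integer in $[n]$ by Horner's rule in $\OO(|G_\ell|)$ time, buckets the leaves by that integer, and brute-forces the type of one representative per bucket. Non-canonical encodings are harmless, since equal encodings imply isomorphic boundaried graphs, and the number of buckets is bounded by a function of the parameters. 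Your route also works, but it does more work and needs the sort only to make the keys deterministic.

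What is missing is the condition that makes either route legal in the word-RAM with $\Theta(\log n)$-bit words. Your trie has size $g(N,r,p)$, and so do the precomputed tables for $\oplus$ and $\per_f$. When $n$ is only slightly above $h_1$, this is far larger than $n^{\OO(1)}$ (roughly $2^{N^2}$ nodes with $N = h_1 w$). Pointers into the trie then do not fit in a word, and a lookup is no longer $\OO(1)$ per symbol. Your brute-force escape covers only $n < h_1$.

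You need to brute-force the whole instance whenever $n$ is below a suitable polynomial root of the table size. This still costs only a computable function of $k,r,p$, and it is exactly what the paper does with the case $n \le s^{s+2s^2+b}$ in \Cref{lem:tabulation}. So the word-RAM concern you flagged is misplaced: the joint radix sort and per-leaf relabelling are fine as you describe them, and addressability of the table is the real issue.

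Two smaller fixes are also needed. First, the trie key must include the vertex count $|\bag(\ell)|$, because isolated vertices affect the counting predicates. Second, the $\oplus$ and $\per_f$ tables are unnecessary: evaluating these functions on the fly already fits within your $h_1$ budget per node, which is how the paper does it.
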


In this section, we first prove \Cref{thm:typecourcelle}, and then use it to derive the consequences mentioned in \Cref{sec:intro}.

\subsection{Proof of \Cref{thm:typecourcelle}}
This subsection is dedicated to the proof of \Cref{thm:typecourcelle}.
We start by proving a central lemma, which encapsulates the use of tabulation in the algorithm of \Cref{thm:typecourcelle}.

\begin{lemma}
\label{lem:tabulation}
Let $n$ be an integer so that the word-length is $\Theta(\log n)$ bits.
There is an algorithm that, given integers $b,r,p,s$, and a list $\bg{G_1}, \ldots, \bg{G_\ell}$ of $b$-boundaried graphs, so that $|V(G_i)| \le s$, $V(G_i) \subseteq [n]$, and $\sum_{i=1}^\ell |G_i| \le 2n$, in time $\OO(n) + \ell \cdot f(b,r,p) + g(b,r,p,s)$ returns $\type^{r,p}(\bg{G_i})$ for all $i \in [\ell]$, where $f$ and $g$ are computable functions.
\end{lemma}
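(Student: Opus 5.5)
We plan to canonically encode each boundaried graph as a short integer key. We then precompute a global table indexed by these keys that stores the type, and finally look up each $\bg{G_i}$ in this table.

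\emph{Step 1 (relabeling to a canonical vertex set).} For each $\bg{G_i}$ we replace its vertex labels, which lie in $[n]$, by labels in $[|V(G_i)|] \subseteq [s]$, so that isomorphic copies become comparable. Doing this with a separate array per graph would cost $n$ per graph. Instead we use one global array $A$ of length $n$, allocated once without initialization (or initialized in $\OO(n)$ time). For graph $i$ we scan $V(G_i)$ in list order and set $A[v] = j$ for the $j$-th vertex. We then rewrite each edge $uv$ as $(A[u],A[v])$ and each boundary value $\bd(t)$ as $A[\bd(t)]$. This costs $\OO(|G_i| + b)$ per graph, so the total is $\OO(n) + \ell \cdot \OO(b)$. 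The stale entries of $A$ do no harm, because we only read $A[v]$ for vertices of the current graph.

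\emph{Step 2 (encoding as a word-sized key).} A relabeled graph on $\le s$ vertices, together with its boundary map $[b] \pto [s]$, is determined by its vertex count $\le s$, its adjacency matrix ($s^2$ bits), and $b$ entries from $\{0,\dots,s\}$. This is $N = \OO(s^2 + b\log s)$ bits. We distinguish two regimes. If $N \le \frac12 \log n$, the key fits in one word, and the table has $2^N \le \sqrt n$ entries. If $N > \frac12\log n$, then $n \le 2^{2N}$, which is bounded by a function of $b$ and $s$, so we may spend $h(b,r,p,s)$ time in total: for example, we compute each type by brute force in time $g'(b,r,p,s)$ per graph, and the number of graphs satisfies $\ell \le 2n \le 2^{2N+1}$. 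In the first regime we build the key for graph $i$ by setting bits in a word. The adjacency bits are set edge by edge in $\OO(|E(G_i)|)$ time, and the boundary entries are added in $\OO(b)$ time, giving $\OO(n) + \ell\,\OO(b)$ overall. Non-canonical ordering is harmless, because isomorphic boundaried graphs have the same type anyway; we only need that identical keys give identical types.

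\emph{Step 3 (table construction and lookup).} We enumerate all $2^N$ keys. For each valid key we decode the boundaried graph and compute its $(r,p)$-type, for instance by evaluating every formula in $\formulas^{b,r,p}$ by brute force over all assignments of its quantified variables. This takes time depending only on $b,r,p,s$. The total is $2^N \cdot g''(b,r,p,s) = g(b,r,p,s)$, and this bound holds in both regimes since we choose the regime by comparing $N$ to $\log n$. We store each type as an index into $\types^{b,r,p}$, computed once, and we answer each query with a single array access.

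The main obstacle is keeping the per-graph overhead at $\OO(|G_i|) + f(b,r,p)$, independent of $s$ and of $n$. Initializing an $s^2$-bit adjacency representation per graph would cost $s^2$ each. We avoid this because the key is a single word that starts at zero, so only the bits of edges actually present are written. The only other dependence is the $b$ boundary fields. We must also verify that shifting and OR-ing within a word are unit-cost operations in the word-RAM model.
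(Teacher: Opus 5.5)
Your proposal is correct and follows essentially the paper's route: relabel each graph onto a vertex set inside $[s]$, encode it as a word-sized integer key, fall back to brute force when $n$ is bounded by a function of $b$ and $s$, and otherwise share the type computations among graphs that receive the same key. The only real difference is that you precompute the type for all $2^N \le \sqrt{n}$ possible keys and then look each graph up directly, whereas the paper groups the graphs by key (counting sort over integers in $[n]$) and brute-forces the type of one representative per group.
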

\begin{proof}
First, if $n \le s^{s+2s^2+b}$, we solve the problem by brute-force in time computable in $n, b, r, p, s$, which in this case is bounded by a computable function in $b,r,p,s$.
Therefore, for the remainder of the proof we assume $s^{s+2s^2+b} < n$.

We use counting sort to in time $\OO(n + \sum_{i=1}^\ell |G_i|)$ transform each $\bg{G_i}$ to an isomorphic $b$-boundaried graph whose vertex set is a subset of $[s]$.
This does not change $\type^{r,p}(\bg{G_i})$.

Each such boundaried graph can be represented by a tuple of length $|V(G)|+2 |E(G)| + b$, whose all members are integers in $[s]$, by first listing the set of vertices, then the set of edges, and then the boundary.
We compute such representation for each $\bg{G_i}$ in total $\OO(\sum_{i=1}^\ell |G_i|) = \OO(n)$ time.
We do not care about the representation being canonical in any way, but observe that if two representations are the same, then the corresponding boundaried graphs are isomorphic.
There are at most $s^{s+2s^2+b}$ such representations.

Because $s^{s+2s^2+b} < n$, the representation of $\bg{G_i}$ can be represented by an integer in $[n]$, which can be computed in $\OO(|G_i|)$ time.
Therefore, we can group the boundaried graphs with the same representation in $\OO(n)$ time.

Now it remains to compute the type for only one boundaried graph in each of the groups.
We do this by brute-force, running in time computable in $b,r,p,s$.
As the number of groups is at most $s^{s+2s^2+b}$, the total running time is also computable in $b,r,p,s$.
We note that $\type^{r,p}(\bg{G_i})$ can be represented in word-size that is computable in $b,r,p$, so the output-size is $\ell \cdot f(b,r,p)$, where $f$ is a computable function.
\end{proof}

Let $\cT = (T, \bag)$ be a rooted tree decomposition of a graph $G$.
For a vertex $v \in V(G)$, we define the \emph{home node} of $v$ to be the lowest-depth node $t_v$ with $v \in \bag(t_v)$.
Due to the subtree-property of tree decompositions, $t_v$ is uniquely defined.
Similarly, for an edge $uv \in E(G)$, we define the \emph{home node} of $uv$ to be the lowest-depth node $t_{uv}$ with $\{u,v\} \subseteq \bag(t_{uv})$.
It is not hard to observe that $t_{uv}$ must be the higher-depth node among the two nodes $t_u$ and $t_v$, so it is indeed uniquely defined.

We say that the \emph{edge-annotation} of $\cT$ is the function $\edges \colon V(T) \to 2^{E(G)}$, that maps each node $t$ to the set of edges $uv$ for which $t_{uv} = t$.

\begin{lemma}
\label{lem:edgeannotate}
Given a graph $G$ and a rooted tree decomposition $\cT = (T,\bag)$ of $G$, the edge-annotation of $\cT$ can be computed in $\OO(|G|+|\cT|)$ time.
\end{lemma}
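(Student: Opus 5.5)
We compute the depth of every node of $T$ and the home node $t_v$ of every vertex $v \in V(G)$. Then, for each edge $uv$, we set $t_{uv}$ to be whichever of $t_u, t_v$ has the larger depth. Appending $uv$ to a list stored at $t_{uv}$ yields $\edges$.

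First, we do a BFS or DFS from the root of $T$ to compute $\mathrm{depth}(t)$ for all $t \in V(T)$, in $\OO(|V(T)|)$ time. Next, we initialize an array indexed by $V(G)$, where $V(G)\subseteq[n]$ so this is a direct array, with $t_v$ undefined. We iterate over all nodes $t$ and all $v \in \bag(t)$. Whenever $t_v$ is undefined or $\mathrm{depth}(t) < \mathrm{depth}(t_v)$, we set $t_v = t$. This scan costs $\OO(|G| + |\cT|)$ time, since the total work is $\sum_t |\bag(t)|$ plus the array initialization. The subtree property guarantees that the node of minimum depth containing $v$ is unique, namely the root of the subtree of nodes containing $v$, so $t_v$ is well defined. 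Finally, for each edge $uv \in E(G)$, we compare $\mathrm{depth}(t_u)$ and $\mathrm{depth}(t_v)$, take the deeper node as $t_{uv}$, and add $uv$ to the list $\edges(t_{uv})$. This takes $\OO(1)$ per edge.

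The only substantive point is correctness of the claim that $t_{uv}$ is the deeper of $t_u$ and $t_v$. We argue it as follows. Let $T_u$ and $T_v$ be the subtrees of nodes whose bags contain $u$ and $v$ respectively. Condition (ii) gives a node in $T_u \cap T_v$, so the two subtrees intersect. The intersection of two intersecting rooted subtrees is a subtree whose top node is the deeper of their two roots $t_u, t_v$. To see this, let $x \in T_u\cap T_v$. Then both $t_u$ and $t_v$ are ancestors of $x$, so they are comparable, say $t_v$ is a descendant of $t_u$ or equal to it. The path from $x$ up to $t_u$ lies in $T_u$ and passes through $t_v$, so $t_v \in T_u\cap T_v$. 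No node of $T_u \cap T_v$ lies above $t_v$, since every node of $T_v$ is a descendant of $t_v$. Hence $t_{uv}=t_v$ is unique and equals the deeper home node. I expect this step to be the only mildly nontrivial one; the rest is bookkeeping. The total running time is $\OO(|G|+|\cT|)$.
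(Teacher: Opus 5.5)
Your proposal is correct and takes the same approach as the paper. It computes node depths and the home nodes $t_v$ in one linear pass over the decomposition, then assigns each edge $uv$ to the deeper of $t_u$ and $t_v$ in constant time; you also prove, soundly, that the deeper home node is indeed the home node of $uv$, which the paper only asserts as an observation just before the lemma.
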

\begin{proof}
First, we compute $t_v$ for all $v \in V(G)$ in time $\OO(|\cT|)$ by depth-first search.
At the same time, we can compute the depth of each node.
Therefore, for each edge $uv$, we can find $t_{uv}$ in constant time by taking the higher-depth of the two nodes $t_v$ and $t_u$.
\end{proof}

Now we are ready to prove \Cref{thm:typecourcelle}.

\typecourcelle*
\begin{proof}
Let $s$ be a positive integer that will be selected during the course of the proof to be large enough, but to depend only on $k$, $r$, and $p$ in a computable manner.
We start by applying the algorithm of \Cref{thm:twappx} with the parameters $G$, $k$, and $s$.
It runs in time $\OO(n+m)$ and either determines that $\tw(G) > k$, or returns a tree decomposition $\cT = (T,\bag)$ of $G$ whose leaf bags have size $\le s \cdot 2^{\OO(k)}$, non-leaf bags size $\le 2^{\OO(k)}$, and the number of nodes is $|V(T)| \le n/s$.

Let $b = 2^{\OO(k)}$ be the maximum size of a non-leaf bag of $\cT$.
By choosing $s \ge b+1$, we have $|\cT| \le n + |V(T)| \cdot (b+1) \le 2n$.

We root $\cT$ at an arbitrary non-leaf node and use \Cref{lem:edgeannotate} to compute the edge-annotation $\edges$ of $\cT$ in $\OO(n+m+|\cT|) = \OO(n+m)$ time.
For each non-leaf node $t$, we define $\bg{G_t}$ to be the $b$-boundaried graph $\bg{G_t} = (G_t, \bd_t)$, with $V(G_t) = \bigcup_{d \in \desc(t)} \bag(d)$, $E(G_t) = \bigcup_{d \in \desc(t)} \edges(d)$, and $\bd_t$ mapping each integer $i \in [|\bag(t)|]$ to the $i$th vertex of $\bag(t)$ in the sorted order (recall that the vertices of $G$ are integers).

For each leaf node $t$ with parent $p$, we define $\bg{G_t}$ to be the $b$-boundaried graph $\bg{G_t} = (G_t, \bd_t)$ with $V(G_t) = \bag(t)$, $E(G_t) = \edges(t)$, and $\bd_t$ mapping each integer $i \in [|\bag(t) \cap \bag(p)|]$ to the $i$th vertex of $\bag(t) \cap \bag(p)$.
Having the $\edges$ function, we can explicitly compute the collection of these boundaried graphs in $\OO(n+m)$ time.

Let the boundaried graphs associated with the leaves be $\bg{G_1}, \ldots, \bg{G_\ell}$.
We note that $\sum_{i=1}^{\ell} |\bg{G_i}| \le |\cT| + m \le 2n+m$, and that $\ell \le n/s$.
We apply the algorithm of \Cref{lem:tabulation} to compute, in time $\OO(n+m)+\ell \cdot f(b,r,p) + g(b,r,p,s)$, for computable functions $f$ and $g$, the type $\type^{r,p}(\bg{G_i})$ for each $i \in [\ell]$.
By choosing $s \ge f(b,r,p)$, the running time is bounded by $\OO(n+m) + g(b,r,p,s)$.

We then compute $\type^{r,p}(\bg{G_t})$ for each non-leaf node $t$ by dynamic programming.
First, in time $|\bag(t)|^{\OO(1)}$, we can sort the vertices of $\bag(t)$ and compute the $b$-boundaried graph $\bg{G_t'}$ with the vertex set $\bag(t)$, edge set $\edges(t)$, and boundary $\bag(t)$ assigned in the sorted order.
Suppose that the children of $t$ are $c_1, \ldots, c_\ell$.
The boundary of $\bg{G_{c_i}}$ is a superset of $\bag(c_i) \cap \bag(t)$.
Let $\bg{G_{c_i}'}$ be the boundaried graph obtained from $\bg{G_{c_i}}$ by restricting the boundary to $\bag(c_i) \cap \bag(t)$, and permuting the indices so that they match to the indices of the same vertices in $\bg{G_t'}$.
We can compute $\type^{r,p}(\bg{G_{c_i}'})$ in time $h_1(b,r,p)$ from $\type^{r,p}(\bg{G_{c_i}})$ with the use of \Cref{lem:courcellepermute}, where $h_1$ is a computable function.
Now, \[\bg{G_t} = \bg{G_t'} \oplus \bg{G_{c_1}'} \oplus \ldots \oplus \bg{G_{c_\ell}'},\]
so we can compute $\type^{r,p}(\bg{G_t})$ in time $\ell \cdot h_2(b,r,p)$ with the use of \Cref{lem:courcellejoin}, where $h_2$ is a computable function.
Therefore, we computed $\type^{r,p}(\bg{G_t})$ in total time $(|\bag(t)|^{\OO(1)} + h_1(b,r,p) + h_2(b,r,p)) \cdot \ell$, where $\ell$ is the number of children, i.e., in time $h(b,r,p) \cdot \ell$, where $h$ is a computable function.

By choosing $s \ge h(b,r,p)$, the total running time spent on the internal nodes is $h(b,r,p) \cdot |V(T)| \le h(b,r,p) \cdot n/s \le \OO(n)$. 
In the end, we obtain $\type^{r,p}(\bg{G_t})$, from which $\type^{r,p}(G)$ is easy to obtain.
The total running time is $\OO(n+m) + f(s, k, r, p)$, for a computable function $f$, where $s$ is bounded by a computable function on $k,r,p$.
\end{proof}

\subsection{Corollaries}
Let us now prove all of the applications of \Cref{thm:typecourcelle} we claimed in \Cref{sec:intro}.
First, we obtain the simple formulation of Courcelle's theorem.

\courcelle*
\begin{proof}
Let $r$ be the quantifier rank of $\varphi$ and $p$ the maximum modulus in modular counting predicates of $\varphi$.
In time computable in $|\varphi|$, we can transform $\varphi$ into a logically equivalent sentence $\varphi'$ in $\formulas^{0,r,p}$.
Then, we use \Cref{thm:typecourcelle} to either conclude $\tw(G) > k$, or compute $\type^{r,p}(G)$.
We have that $G$ satisfies $\varphi$ if and only if $\varphi' \in \type^{r,p}(G)$.
\end{proof}

Then we obtain an algorithm for computing treewidth and pathwidth.

\twcomp*
\begin{proof}
By the result of Lagergren and Arnborg~\cite{DBLP:conf/icalp/LagergrenA91}, for each $w$ there exists a $\CMSO_2$-formula $\varphi^t_w$, computable given $w$, so that $G$ satisfies $\varphi^t_w$ if and only if $\tw(G) \le w$.
By the result of Lagergren~\cite{DBLP:journals/jct/Lagergren98}, for each $w$ there exists a $\CMSO_2$-formula $\varphi^p_w$, computable given $w$, so that $G$ satisfies $\varphi^p_w$ if and only if $\pw(G) \le w$.
We start by computing such formulas $\varphi^t_w$ and $\varphi^p_w$ for all $0 \le w \le k$.
Let $r$ be the maximum quantifier rank in such formulas and $p$ the maximum modulus in the modular counting predicates in such formulas.
We first transfer the formulas to equivalent ones in $\formulas^{0,r,p}$.
Then we run the algorithm of \Cref{thm:typecourcelle} to either compute $\type^{r,p}(G)$ or the conclusion that $\tw(G) > k$, in which case also $\pw(G) > k$.
From $\type^{r,p}(G)$ we obtain for each of the formulas whether $G$ satisfies them, based on which we can return the conclusion.
Everything runs in time $\OO(n+m)+f(k)$, where $f$ is a computable function.
\end{proof}

Then we obtain an algorithm for planar minor testing.

\planarminors*
\begin{proof}
By the grid minor theorem~\cite{RobertsonS86a}, there is a function $g(k)$, so that if a graph $G$ does not contain a planar graph $H$ of size $|H| \le k$ as a minor, then $\tw(G) \le g(k)$.
The function $g$ has computable upper bounds, e.g. polynomial bounds proven in~\cite{DBLP:journals/jacm/ChekuriC16}.

Therefore, if $w = \min_{H_i \text{ is planar}} \{g(|H_i|)\}$, then if $\tw(G) > w$, we can return that $G$ contains at least one graph from $\mathcal{H}$ as a minor.
Given $\mathcal{H}$, we can also compute a sentence $\varphi$ of $\CMSO_2$, which $G$ satisfies if and only if $G$ contains at least one graph from $\mathcal{H}$ as a minor.

We apply the algorithm of \Cref{lem:courcellethm} with $w$ and $\varphi$, and if either $\tw(G) > w$ or $G$ satisfies $\varphi$, return yes, and otherwise return no.
\end{proof}

\section{Lower bound for the size of a tree decomposition}
We show that for all $k$, there are graphs with $n$ vertices, $\OO(n)$ edges, and treewidth $k$, whose all optimum-width tree decompositions $\cT$ have size $|\cT| \ge \Omega(kn)$.
This is tight, since \Cref{lem:simplifytd} implies that every $n$-vertex graph with treewidth $k$ has a tree decomposition of width $k$ and size $\OO(kn)$.

\begin{lemma}
\label{lem:gridlb}
For all positive integers $n$ and $k$ with $n \ge 8k$, every tree decomposition of the $k \times n$ grid graph of width $k$ has at least $\Omega(kn)$ bags of size $\ge k$.
\end{lemma}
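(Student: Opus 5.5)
The plan is a charging argument: assign most of the $kn$ vertices to distinct bags of size $\ge k$. Let $\cT=(T,\bag)$ be a tree decomposition of width $k$ of the $k\times n$ grid $G$. Root $T$ at a node whose bag contains a vertex of the first column. For each vertex $v$, let $t_v$ be its home node, the node of $T$ closest to the root whose bag contains $v$. For a non-root node $t$ with parent $q$, let $X_t=\bag(t)\setminus\bag(q)$, which is exactly the set of vertices whose home node is $t$. The adhesion $S_t=\bag(t)\cap\bag(q)$ separates the vertices appearing only in bags of $\desc(t)$ from the rest of $G$. These separated vertices include $X_t$, and the rest of $G$ includes the root-side vertices. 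Moreover $|S_t|\le k+1-|X_t|$.

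\textbf{Step 1 (isoperimetry in the grid).} I would first prove that a vertex set $S$ with $|S|\le k-1$ cannot separate two sets that each contain more than $\OO(k^2)$ vertices. The reason is that fewer than $k$ vertices cannot meet every row, so some entire row survives in $G-S$. Every column that avoids $S$, and there are at least $n-k+1\ge 7k$ of them, is attached to that row. Hence all of these columns lie in one component of $G-S$. The vertices outside this giant component are confined to at most $k-1$ columns near $S$, or are cut off by $S$ in a near-corner fashion. So the small side has at most $\OO(k^2)$ vertices. A sharper version of the same argument, using the standard edge-isoperimetry of the grid, should bound the small side by $\OO(|S|^2)$.

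\textbf{Step 2 (charging).} Call a non-root node $t$ \emph{good} if $|X_t|=1$ and $|\bag(t)|\ge k$, and \emph{bad} otherwise.
\begin{itemize}
\item Good nodes are home nodes of exactly one vertex each. Hence the number of bags of size $\ge k$ is at least the number of vertices whose home node is good.
\item If $t$ is bad, then either $|X_t|\ge 2$, or $|X_t|=1$ and $|\bag(t)|\le k-1$. In both cases $|S_t|\le k-1$.
\item By Step 1, and since the root side contains first-column vertices and most of the grid, the part of $G$ hanging below a bad node is a pocket of $\OO(k^2)$ vertices separated by fewer than $k$ vertices. The bound needs $n\ge 8k$ to ensure the root side is the giant component.
\end{itemize}
Therefore every vertex with a bad home node lies in such a pocket, and it suffices to show that the union of all pockets has at most $(1-\varepsilon)kn$ vertices. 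Then at least $\varepsilon kn$ vertices have good home nodes, giving $\Omega(kn)$ bags of size $\ge k$.

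\textbf{Step 3 (main obstacle).} The hard step is bounding the total size of the pockets. Maximal pockets, those below bad nodes with no bad ancestor, are disjoint, so it suffices to bound the union of the maximal ones. Each maximal pocket $P$ has a separator of size $<k$. I would like to charge $|P|$ against its separator. The intended mechanism is that the separators of distinct maximal pockets are adhesions at distinct edges of $T$, and that every separator vertex must itself have its home node outside the pocket. If each pocket of size $a$ needs about $\sqrt a$ separator vertices, while disjoint pockets can share few separator vertices, then only $\OO(1)$ vertices per column can lie in pockets away from the grid boundary. This would leave $\Omega(k)$ vertices per middle column with good home nodes.

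The step I am least sure of is exactly this: excluding a configuration in which pockets tile almost all of the grid using few shared separators. If the charging above fails, I would fall back on a more direct argument. That argument considers the $\ge n/2$ middle columns and shows, via Step 1 and the connectivity of $T$, that for each middle column $j$ some path in $T$ carries bags of size $\ge k$ separating columns $<j$ from columns $>j$. Consecutive bags along this path differ by few vertices in a way that forces $\Omega(k)$ distinct bags per column window of width $\OO(1)$.
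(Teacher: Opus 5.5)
There is a genuine gap. Your Step~2 reduction is sound once the root is chosen properly, but Step~3 carries all the content of the lemma, it is not proved, and the mechanism you suggest for it is false.

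First, the rooting. A root whose bag meets the first column need not lie on the giant side. The root bag may be a small leaf such as $\{(1,1),(1,2),(2,1)\}$, cut off by an adhesion of size at most $3$. Its neighbour is then a bad node with almost the whole grid below it. You need a balanced root, as the paper uses: a node $r$ such that each component of $T-r$ contains at most $kn/2$ vertices outside $\bag(r)$.

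Second, the claim that only $O(1)$ vertices per column lie in pockets fails. Colour the grid as a checkerboard, with vertex (row $i$, column $j$) white iff $i+j$ is even. Let $W$ be the graph on the white vertices in which two are adjacent when they have a common black neighbour. Order the white vertices column by column, top to bottom. When $(i,j)$ is introduced, the earlier vertices that still have a later $W$-neighbour are those of column $j-2$ in rows $\ge i$, those of column $j$ in rows $<i$, and all of column $j-1$. That is exactly one per row, so $W$ has a path decomposition of width $k$. Each $N(b)$ with $b$ black is a clique of $W$, hence lies inside some bag; attach a leaf bag $N[b]$ there. For $k\ge 6$ this is a width-$k$ decomposition of the grid in which every black vertex has a bad home node. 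So about half of every column lies in pockets, each of size one and separated by four vertices. No local isoperimetric charging can therefore keep the union of pockets away from $kn$; you need a global argument.

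Your fallback points in the right direction, and it is essentially the paper's proof, but it omits exactly what makes it work. In an arbitrary decomposition, consecutive bags need not differ by few vertices; they can even be disjoint. You also need a single path that sweeps $\Omega(n)$ columns, not one path per column. The paper proceeds as follows.
\begin{enumerate}
\item Subdivide every edge $xy$ of $T$ by a node with bag $\bag(x)\cap\bag(y)$, which at most doubles the number of bags of size $\ge k$, and root at a balanced node $r$.
\item Let $V_x$ be the union of the bags below $x$. Call a non-root node $t$ \emph{potent} if $V_t\setminus\bag(t)$ contains a row-spanning component of $G-\bag(t)$.
\item Walk down through potent nodes from one of the two potent children $c_1,c_2$ of $r$.
\item Every bag on this walk, subdivision bags included, separates two row-spanning components: one below it and one below $c_2$. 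Since fewer than $k$ vertices miss some row, every such bag has size $\ge k$. As adjacent bags are nested, consecutive bags differ by at most one vertex.
\item A set of at most $k+1$ vertices that leaves two row-spanning components leaves at most one further component, and that component is a single vertex. Hence the branches hanging off the walk lose at most one extra vertex per step, giving $|V_y|\ge |V_x|-2$ for consecutive $x,y$.
\item Since $|V_x|$ drops from $\Omega(kn)$ at $c_1$ to $O(k^2)$ at the last potent node, the walk has $\Omega(kn)$ nodes, all with bags of size $\ge k$.
\end{enumerate}
Your Step~2 target, that $\Omega(kn)$ vertices have good home nodes, is in fact true, but it follows from this path. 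So the pocket reduction is a detour, not a route to the proof.
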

\begin{proof}
Let $G$ be the $k \times n$ grid graph.
We assume that $k \ge 2$.
We start by proving properties of separators of size $\le k+1$ of $G$.

For a set $X \subseteq V(G)$ and a connected component $C$ of $G-X$, we say that $C$ is \emph{large} if $C$ intersects all of the $k$ distinct rows of $G$. Otherwise, we say that $C$ is \emph{small}.

\begin{claim}
If $|X| \le k+1$, then at most $k^2$ vertices of $G-X$ are in small components.
\end{claim}
\begin{claimproof}
Let $S \subseteq V(G) \setminus X$ be the union of the small components.
If $S$ would intersect $k+2$ distinct columns of $G$, then because $|X| \le k+1$, one of these columns would not intersect with $X$, implying that $S$ would contain the column entirely, causing $S$ to contain a large component.
It follows that $S$ intersects at most $k+1$ distinct columns.
In fact, $S$ intersects at most $k$ distinct columns, because if $S$ intersected $k+1$ distinct columns, $X$ would occur only in the columns where $S$ occurs, causing $S$ to intersect at least $(n-k-1)/2 > k+1$ distinct columns.

Because $S$ intersects at most $k$ distinct columns, it can have at most $k^2$ vertices.
\end{claimproof}

\begin{claim}
\label{lem:gridlb:claim2}
If $|X| < k$, then $G-X$ has only one large component.
\end{claim}
\begin{claimproof}
There is a row that is disjoint from $X$, so that row is contained in one connected component $C$ of $G-X$, and thus no other connected component can be large.
\end{claimproof}

\begin{claim}
\label{lem:gridlb:claim3}
If $|X| \le k+1$ and $G-X$ has more than one large component, then $G-X$ has exactly two large components and at most one small component.
Furthermore, if there is a small component, then it has only one vertex.
\end{claim}
\begin{claimproof}
First, note that three large components would force $X$ to contain at least two vertices from each row, implying $|X| \ge 2k > k+1$.

Because $G-X$ has two large components, $X$ must intersect all rows of $G$.
It follows that there is at most one row which intersects $X$ in two vertices, and all other rows intersect $X$ in exactly one vertex.
When traveling along a row that intersects $X$ in exactly one vertex, it must first intersect one of the large components, then $X$, and then the other large component.
Therefore, it does not intersect any small component, and thus all small components must be contained within the single row that intersects $X$ in two vertices.
It follows that there can be only one small component.
Furthermore, if that component would contain more than one vertex, then the adjacent row would have to intersect $X$ in at least two vertices.
It follows that the small component has only one vertex.
\end{claimproof}

Let $\cT = (T,\bag)$ be a tree decomposition of $G$ of width $k$.
We edit $\cT$ by subdividing each edge $xy$ of $T$ and adding a new bag $\bag(t_{xy}) = \bag(x) \cap \bag(y)$ on the subdivision node $t_{xy}$.
This increases the number of bags of size $\ge k$ by at most a factor of $2$, since every new such bag can be charged from its child of size $\ge k$.
Therefore, we assume without loss of generality that $\cT$ satisfies the property that for all $xy \in E(T)$, either $\bag(x) \subseteq \bag(y)$ or $\bag(y) \subseteq \bag(x)$.

We root $\cT$ at a node $r \in V(T)$ so that the bags of each component of $T - r$ contain at most $kn/2$ vertices of $G \setminus \bag(r)$.
For a node $t \in V(T)$, we let $T_t$ be the subtree rooted at $t$, and $V_t$ the union of bags of $T_t$.
Because at most $k^2$ vertices of $G-\bag(r)$ are contained in small components and $kn - kn/2 - (k+1) > k^2$, there are at least two children $c$ of $r$ so that $V_c$ contains a large component of $G-\bag(r)$.
For both of them it holds that $|V_c| \ge kn - kn/2 - (k+1) - k^2 \ge kn/4$.
By $kn/4 > k^2 + k + 1$, this again implies that $V_c \setminus \bag(c)$ contains a large component of $G-\bag(c)$.

We say that a non-root node $t \in V(T)$ is \emph{potent} if $V_t \setminus \bag(t)$ contains a large component of $G-\bag(t)$.
We argued above that exactly two children of the root are potent, denote them by $c_1$ and $c_2$.
Let $t$ be a descendant of $c_1$ so that all ancestors of $t$ are potent (including $t$ itself) but none of the children of $t$ are potent.

\begin{claim}
\label{lem:gridlb:claim4}
The path from $t$ to $c_1$ contains $\Omega(kn)$ nodes with bags of size $\ge k$.
\end{claim}
\begin{claimproof}
For each node $x$ on this path, the graph $G-\bag(x)$ has two large components: One contained in $V_x \setminus \bag(x)$, and one contained in $V_{c_2} \setminus \bag(c_2)$.
\Cref{lem:gridlb:claim2} implies that all bags on the path have size $\ge k$.
It remains to prove that this path contains $\Omega(kn)$ nodes.

Let $x$ and $y$ be on the path so that $y$ is a child of $x$.
We have that $|\bag(x) \setminus \bag(y)| \le 1$, because either $\bag(x) \subseteq \bag(y)$, or $\bag(y) \subseteq \bag(x)$, $|\bag(y)| \ge k$, and $|\bag(x)| \le k+1$.
Furthermore, because $G-\bag(x)$ has two large components, one contained in $V_{c_2} \setminus \bag(c_2)$ and one in $V_y \setminus \bag(y)$, the subtrees of the other children of $x$ contain in total at most one vertex not in $\bag(x)$.
It follows that $|V_y| \ge |V_x|-2$.

Let $s$ be the child of $t$ so that $V_s \setminus \bag(t)$ contains a large component of $G - \bag(t)$.
Because $V_s \setminus \bag(s)$ does not contain a large component of $G - \bag(s)$, we have that $|V_s| \le k^2 + k+1$.
It follows that $|V_t| \le 2 k^2 + 2k + 2$.
We have that $|V_{c_1}|-|V_t| \ge \Omega(kn)$, but $|V_y| \ge |V_x|-2$ for all consecutive $x$ and $y$ on the path, implying that the path contains $\Omega(kn)$ nodes.
\end{claimproof}

\Cref{lem:gridlb:claim4} finishes the proof.
\end{proof}

\section{Conclusions}
In this paper we gave a method for designing TLFPT algorithms parameterized by treewidth, giving in particular a TLFPT version of Courcelle's theorem.
This solved three questions posed by Bumpus et al.~\cite{DBLP:journals/corr/abs-2606-02492}, and partially resolved a fourth question.
We believe that our method applies for most decision problems and unweighted optimization problems that are solved by dynamic programming on tree decompositions.

A setting for which it is not clear whether TLFPT algorithms can be obtained is weighted optimization problems, even when the weights are in $[n]$.
In particular, we ask as an open question whether maximum weight independent set parameterized by treewidth, with weights in $[n]$, where $n$ is the number of vertices, is in TLFPT.
The main barrier is solving the problem in TLFPT time when parameterized by the maximum size of a connected component; we believe that if a TLFPT algorithm existed with this parameterization, it could also be lifted to treewidth with our techniques.

We gave a TLFPT algorithm for computing the value of treewidth, but noted that an explicit representation of an optimum-width tree decomposition may require space $\Omega(k(n+m))$.
However, it could still be possible that an implicit representation could be computed in TLFPT, or that an explicit representation for an approximation better than $2^{\OO(k)}$ could be computed in TLFPT time.
Perhaps the techniques of Boja{ń}czyk and Pilipczuk~\cite{DBLP:journals/lmcs/BojanczykP22} could be useful to this end.

The two open questions from~\cite{DBLP:journals/corr/abs-2606-02492} that we did not address are (1) ``Which FPT problems are likely not to be in TLFPT?'' and (2) ``Is rankwidth TLFPT parameterized by rankwidth?''.
For the second question, we note that if the input graph is sufficiently dense, then the algorithm of Korhonen and Sokolowski~\cite{DBLP:conf/stoc/Korhonen024} runs in TLFPT time, but for sparse input graphs, it is open even whether rankwidth is linear FPT parameterized by rankwidth.
We agree that designing techniques for ruling out TLFPT algorithms, especially for problems known to be linear FPT, is an interesting open direction.



\bibliographystyle{alpha}
\bibliography{main}

\end{document}